\newcommand{\dist}{\mathrm{dist}}
\theoremstyle{plain} 
\newcommand{\C}{\mathcal{C}}
\newtheorem{theorem}{Theorem}[section]
\newtheorem*{theorem*}{Theorem}
\newtheorem{lemma}[theorem]{Lemma}
\newtheorem*{proposition*}{Proposition}
\newtheorem{corollary}[theorem]{Corollary}
\newtheorem*{corollary*}{Corollary}
\newtheorem*{fact*}{Fact}
\theoremstyle{definition}
\newtheorem*{definition*}{Definition}
\newtheorem{definition}[theorem]{Definition}
\newtheorem*{question*}{Question}
\newtheorem*{notation*}{Notation}
\theoremstyle{remark}
\newcommand{\Ind}[1]
{#1\setbox0=\hbox{$#1x$}\kern\wd0\hbox to 0pt{\hss$#1\mid$\hss} \lower.9\ht0\hbox to 0pt{\hss$#1\smile$\hss}\kern\wd0}
\newcommand{\notind}[1]
{#1\setbox0=\hbox{$#1x$}\kern\wd0
\hbox to 0pt{\mathchardef\nn=12854\hss$#1\nn$\kern1.4\wd0\hss}
\hbox to 0pt{\hss$#1\mid$\hss}\lower.9\ht0 \hbox to 0pt{\hss$#1\smile$\hss}\kern\wd0}
\newcommand{\N}{\mathbb{N}}
\newcommand{\Gaif}{\mathsf{Gaif}}
\title{Preservation theorems on sparse classes revisited}
\author[A. Dawar]{Anuj {Dawar}\ \orcidlink{0000-0003-4014-8248}}
\address{Anuj {Dawar}, Department of Computer Science and Technology, University of Cambridge, UK}
\email{\href{anuj.dawar@cl.cam.ac.uk}{anuj.dawar@cl.cam.ac.uk}}
\author[I. Eleftheriadis]{Ioannis {Eleftheriadis}\ \orcidlink{0000-0003-4764-8894}}
\address{Ioannis {Eleftheriadis}, Department of Computer Science and Technology, University of Cambridge, UK}
\email{\href{ie257@cam.ac.uk}{ie257@cam.ac.uk}}
\thanks{Supported by a George and Marrie Vergottis Scholarship awarded through Cambridge Trust, an Onassis Foundation Scholarship, and a Robert Sansom Studentship.}
\subjclass[2020]{03B70, 03C13, 68Q19, 05C10}
\keywords{Finite model theory, preservation theorems, graph homomorphisms, sparsity theory} 
\begin{document}

\begin{abstract}
    We revisit the work studying homomorphism preservation for first-order logic in sparse classes of structures initiated in [Atserias et al., JACM 2006] and [Dawar, JCSS 2010].  These established that first-order logic has the homomorphism preservation property in any sparse class that is monotone and addable.  It turns out that the assumption of addability is not strong enough for the proofs given.  We demonstrate this by constructing classes of graphs of bounded treewidth which are monotone and addable but fail to have homomorphism preservation.  We also show that homomorphism preservation fails on the class of planar graphs.  On the other hand, the proofs of homomorphism preservation can be recovered by replacing addability by a stronger condition of amalgamation over bottlenecks.  This is analogous to a similar condition formulated for extension preservation in [Atserias et al., SiCOMP 2008].
\end{abstract}

\maketitle

\section{Introduction}

Preservation theorems have played an important role in the development of finite model theory.  They provide a correspondence between the syntactic structure of first-order sentences and their semantic behaviour.  In the early development of finite model theory it was noted that many classical preservation theorems fail when we limit ourselves to finite structures.  An important case in point is the {\L}o{\'s}-Tarski or \emph{extension} preservation theorem, which asserts that  a first-order formula is preserved by embeddings between all structures if, and only if, it is equivalent to an existential formula. Interestingly, this was shown to fail on finite structures~\cite{Tait_1959} much before the question attracted interest in finite model theory~\cite{gurevich1984toward}.  On the other hand, the \emph{homomorphism} preservation theorem, asserting that formulas preserved by homomorphisms are precisely those equivalent to existential-positive ones, was remarkably shown to hold on finite structures by Rossman~\cite{rossman2008homomorphism}, spurring applications in constraint satisfaction and database theory.

However, even before Rossman's celebrated result, these preservation properties were investigated on subclasses of the class of finite structures.  In the case of both the extension and homomorphism preservation theorems, the direction of the theorem stating that the syntactic restriction implies the semantic closure condition is easy and holds in restriction to any class of structures.  It is the other direction that may fail and, restricting to a subclass weakens both the hypothesis and the conclusion, therefore leading to an entirely new question.  Thus, while the class of all finite structures is combinatorially wild, it contains \emph{tame} classes which are both algorithmically and model-theoretically better behaved~\cite{dawar2007finite}.  A study of preservation properties for such restricted classes of finite structures was initiated in~\cite{atseriaspreservation} and~\cite{extensionspreservation}, which looked at homomorphism preservation and extension preservation respectively.  The focus was on tame classes defined by \emph{sparsity} conditions, which allows for methods based on the \emph{locality} of first-order logic.  In particular, the sparsity conditions were based on what have come to be called wideness conditions.

We recall the formal definition of wideness in Section~\ref{sec:amalgamation} below but, informally, a class of structures $\C$ is called \emph{wide} if in any large enough structure in $\C$ we can find a large set of elements that are pairwise far away from each other.  The class $\C$ is \emph{almost wide} if there is a constant $s$ so that in any large enough structure in $\C$, removing at most $s$ elements gives a structure in which we can find a large set of elements that are pairwise far away from each other.  Finally, $\C$ is said to be \emph{quasi-wide} if  there is a function $s$ so that in any large enough structure in $\C$, removing at most $s(d)$ elements gives a structure in which we can find a large set of elements that are pairwise at distance $d$ from each other.  In the latter two cases, we refer to a set of elements whose removal yields a large scatterd set as a \emph{bottleneck} set.

The main result asserted in~\cite{atseriaspreservation} is that homomorphism preservation holds in any class $\C$ which is almost wide and is \emph{monotone} (i.e.\ closed under substructures) and \emph{addable} (i.e.\ closed under disjoint unions).  From this, it is concluded that homomorphism preservation holds for any class $\C$ whose Gaifman graphs exclude some graph $G$ as a minor, as long as $\C$ is monotone and addable.  The result was extended from almost wide to quasi-wide classes in~\cite{dawar2010homomorphism}, from which homomorphism preservation was deduced for classes that locally exclude minors and classes that have bounded expansion, again subject to the proviso that they are monotone and addable.  Quasi-wide classes were later identified with \emph{nowhere dense} classes, which are now central in structural and algorithmic graph theory \cite{nevsetvril2012sparsity}. 

The main technical construction in~\cite{atseriaspreservation} is concerned with showing that classes of graphs which exclude a minor are indeed almost wide.  The fact that homomorphism preservation holds in monotone and addable almost wide classes is deduced from a construction of Ajtai and Gurevich~\cite{ajtai} which shows the ``density'' of minimal models of a first-order sentence preserved under homomorphisms, and the fact that in an almost wide class a collection of such dense models must necessarily be finite.  While the Ajtai and Gurevich construction is carried out within the class of all finite structures, it is argued in~\cite{atseriaspreservation} that it can be carried out in any monotone and addable class because of ``the fact that disjoint union and taking a substructure are the only constructions used in the proof''~\cite[p.~216]{atseriaspreservation}.  This argument is sketched in a bit more detail in~\cite[Lemma~7]{dawar2010homomorphism}.  The starting point of the present paper is that this argument is flawed.  The construction requires us to take not just disjoint unions, but unions that identify certain elements: in other words \emph{amalgamations} over sets of points.  On the other hand, we can relax the requirement of monotonicity to just hereditariness (i.e.\ closure under induced substructures).  The conclusion is that homomorphism preservation holds in any class $\C$ that is quasi-wide, hereditary and closed under amalgamation over bottleneck points.  The precise statement is given in Theorem~\ref{thm:main} below.  We also show that the requirements formulated in~\cite{atseriaspreservation} are insufficient by constructing a class that is almost wide (indeed, has bounded treewidth), is monotone and addable, but fails to have the homomorphism preservation property.

Interestingly, the requirement of amalgamations over bottlenecks is similar to that used to define classes on which the extension preservation property holds  in~\cite{extensionspreservation}, even though the construction uses rather different methods.  The result there can be understood, in our terms, as showing that the extension preservation theorem holds in any almost wide, hereditary class with amalgamation over bottlenecks.  As we observe below (in Corollary~\ref{cor:implies}), this implies that homomorphism preservation holds in all such classes.  Our Theorem~\ref{thm:main} then strengthens this to quasi-wide classes, where we do not know if extension preservation holds.  The class of planar graphs is an interesting case as it is used in~\cite{extensionspreservation} as an example of a hereditary, addable class with excluded minors in which extension preservation fails.  We show here that homomorphism preservation also fails in this class, strengthening the result of~\cite{extensionspreservation}.

In the rest of this paper, we introduce notation and necessary background in Section~\ref{sec:prelim}.  We construct a monotone, addable class of graphs of small treewidth in Section~\ref{sec:treewidth}, providing the first counterexample to the claims of~\cite{atseriaspreservation}.  Section~\ref{sec:amalgamation} states and proves the corrected version of the preservation theorems and Section~\ref{sec:planar} shows the failure of homomorphism preservation in the class of planar graphs.

\section{Preliminaries}\label{sec:prelim}

We assume familiarity with the standard notions of finite model theory and structural graph theory, and refer to \cite{ebbinghaus} and \cite{nevsetvril2012sparsity} for reference. We henceforth fix a finite relational vocabulary $\tau$; by a structure we implicitly mean a $\tau$-structure. 
We often abuse notation and do not distinguish between a structure and its domain.
Given two structures $A,B$, a homomorphism $f:A \to B$ is a map such that for all relation symbols $R$ and tuples $\bar a$ from $A$ we have $\bar a \in R^A \implies f(\bar a) \in R^B$. If moreover $f(\bar a) \in R^B \implies \bar a \in R^A$ then $f$ is said to be \emph{strong}. An injective strong homomorphism is called an \emph{embedding}. We also call a homomorphism $f:A \to B$  \emph{full} if it is surjective and for any relation symbol $R$ and tuple $\bar b$ from $B$ we have $\bar b \in R^B \implies \exists \bar a \in R^A \text{ with }f(\bar a)=\bar b$.

A structure $B$ is said to be a \emph{weak substructure} 
of a structure $A$ if $B \subseteq A$ and the inclusion map $\iota: B \hookrightarrow A$ is a homomorphism. Likewise, $B$ is an \emph{induced substructure} 
of $A$ if the inclusion map is an embedding. 
Given a structure $A$ and a subset $S \subseteq A$ we write $A[S]$ for the unique induced substructure of $A$ with underlying set $S$. 
An induced substructure $B$ of $A$ is said to be \emph{free in $A$} if there is some structure $C$ such that $A$ is the disjoint union $B + C$. Finally, a substructure $B$ of $A$ is said to be \emph{proper} if the inclusion map is not full. We say that a class of structures is \emph{monotone} if it is closed under weak substructures, and it is \emph{hereditary} if it is closed under induced substructures. Moreover a class is called \emph{addable} if it is closed under taking disjoint unions. We often consider classes of undirected graphs.  Seen as a relational structure, this is a set with an irreflexive symmetric relation $E$ on it.  A weak substructure of such a graph need not be a graph.  However, when we speak of a monotone class of undirected graphs, we mean it in the usual sense of a class of graphs closed under the operations of removing edges and vertices.

Given a structure $A$ and an equivalence relation $E \subseteq A \times A$ we define the quotient structure $A / E$ as the structure whose domain $A / E = \{[a]_E: a \in A\}$ is the set of $E$-equivalence classes and such that for every relation symbol $R$ of arity $n$ we have $R^{A/E}=\{([a_1],\dots,[a_n]) \in A/E: (a_1,\dots,a_n) \in R^A\}$. The quotient map $\pi_E: A \twoheadrightarrow A/E$ is a full homomorphism which we call the \emph{quotient homomorphism}. Given structures $A,B$ and a set $S\subseteq A\cap B$ such that $A[S]=B[S]$, we write $A \oplus_S B$ for the \emph{free amalgam of $A$ and $B$ over $S$}. This can be defined as the quotient of the disjoint union $A+B$ by the equivalence relation generated by $\{(\iota_A(s),\iota_B(s)):s \in S\}$, where $\iota_A:S \to A$ and $\iota_B:S \to B$ are the inclusion maps. Evidently, there is an injective homomorphism $A \to A\oplus_S B$ given by composing the inclusion $A \to A+B$ with the quotient $A+B \to A\oplus_S B$, and a full homomorphism $A+A \to A$ which descends to a full homomorphism $A\oplus_S A \to A$. 


By the \emph{Gaifman graph} of a structure $A$ we mean the undirected graph $\Gaif(A)$ with vertex set $A$  such that two elements are adjacent if, and only if, they appear together in some tuple of a relation of $A$. Given a structure $A$, $r \in \N$, and $a \in A$, we write $B^r_A(a)$ for the \emph{ball of radius $r$ around $a$ in $A$}, that is, the set of elements of $M$ whose distance from $a$ in $\Gaif(A)$ is at most $r$; we shall often abuse notation and write $B^r_A(a)$ to mean the induced substructure $A[B^r_A(a)]$ of $A$, possibly with a constant for the element $a$. A set $S \subseteq A$ is said to be \emph{$r$-independent} if $b \notin B^r_A(a)$ for any $a,b \in A$. 

For $r \in \N$, let $\mathrm{dist}(x,y)\leq r$ be the first-order formula expressing that the distance between $x$ and $y$ in the Gaifman graph is at most $r$, and $\mathrm{dist}(x,y)> r$ its negation. Clearly, the quantifier rank of $\mathrm{dist}(x,y)\leq r$ less than $r$. A \emph{basic local sentence} is a sentence
\[ \exists x_1, \dots, x_n (\bigwedge_{i \neq j} \mathrm{dist}(x_i,x_j)> 2r \land \bigwedge_{i \in [n]} \psi^{B^r(x_i)}(x_i)),\]
where $\psi^{B^r(x_i)}(x_i)$ denotes the relativisation of $\psi$ to the $r$-ball around $x_i$, i.e. the formula obtained from $\psi$ by replacing every quantifier $\exists x \ \theta$ with $\exists x (\dist(x_i,x)\leq r \land \theta)$, and likewise every quantifier $\forall x \ \theta$ with $\forall x (\dist(x_i,x)\leq r \to \theta)$. We call $r$ the \emph{locality radius}, $n$ the \emph{width}, and $\psi$ the \emph{local condition} of $\phi$.  Recall the Gaifman locality theorem~\cite[Theorem~2.5.1]{ebbinghaus}.

\begin{theorem}[Gaifman Locality]
    Every first-order sentence of quantifier rank $q$ is equivalent to a Boolean combination of basic local sentences of locality radius $7^q$. 
\end{theorem}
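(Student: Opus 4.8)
We sketch the proof; see~\cite[Theorem~2.5.1]{ebbinghaus} for the details. The plan is to establish the stronger \emph{Gaifman normal form}: every first-order formula $\varphi(\bar x)$ of quantifier rank $q$ is equivalent to a Boolean combination of basic local sentences and formulas that are \emph{$r$-local around $\bar x$} — equivalent to their own relativisation to $B^r(\bar x):=\bigcup_i B^r(x_i)$ — for some $r\leq 7^q$. The theorem is then the special case of sentences, for which the local-around-$\bar x$ formulas collapse to $\top$ and $\bot$, leaving a Boolean combination of basic local sentences. We argue by induction on $q$. For $q=0$ a quantifier-free formula is a Boolean combination of atomic formulas, each of which mentions only $\bar x$ and so is $0$-local around $\bar x$. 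Since the target class is closed under Boolean combinations, the inductive step only needs to bring a single formula $\exists y\,\psi(\bar x,y)$ into normal form, where $\psi$ has quantifier rank $q$.

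By the induction hypothesis (applied with parameters $(\bar x,y)$), $\psi$ is a Boolean combination of basic local sentences and formulas $t$-local around $\{\bar x,y\}$ with $t\leq 7^q$. Putting this combination in disjunctive normal form and distributing $\exists y$, each disjunct is a conjunction of (possibly negated) basic local sentences — which do not mention $y$ and so pull outside the quantifier — together with a single formula $\sigma(\bar x,y)$ that is $t$-local around $\{\bar x,y\}$. Thus it suffices to normalise $\exists y\,\sigma(\bar x,y)$. We split on the distance from $y$ to $\bar x$:
\[
\exists y\,\sigma \ \equiv\ \bigvee_{i}\exists y\big(\dist(y,x_i)\leq 2t+1\wedge\sigma\big)\ \vee\ \exists y\big(\textstyle\bigwedge_i\dist(y,x_i)>2t+1\wedge\sigma\big).
\]
In the first part, the conjunct $\dist(y,x_i)\leq 2t+1$ confines $y$ to $B^{2t+1}(\bar x)$, whence $B^t(\bar x)\cup B^t(y)\subseteq B^{3t+1}(\bar x)$; since $\sigma$ equals its relativisation to $B^t(\bar x)\cup B^t(y)$, each disjunct is $(3t+1)$-local around $\bar x$, and $3t+1\leq 7^{q+1}$.

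In the second part, the hypothesis $\dist(y,x_i)>2t+1$ for all $i$ forces $B^t(y)$ to be disjoint from, and Gaifman-non-adjacent to, $B^t(\bar x)$; hence $B^t(\bar x)\cup B^t(y)$ is the disjoint union of the induced substructures $B^t(\bar x)$ and $B^t(y)$. By a standard Feferman--Vaught-type composition lemma, the quantifier-rank-$q$ type of this pointed disjoint union is determined by the quantifier-rank-$q$ types of $(B^t(\bar x),\bar x)$ and $(B^t(y),y)$ separately, so in this regime $\sigma$ is equivalent to $\bigvee_{(\Sigma,\Delta)}\alpha_\Sigma(\bar x)\wedge\beta_\Delta(y)$, the disjunction ranging over the pairs of types whose disjoint sum satisfies $\sigma$, with $\alpha_\Sigma$ being $t$-local around $\bar x$ and $\beta_\Delta$ being $t$-local around $y$. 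Grouping by $\Sigma$ reduces the second part to a disjunction of formulas $\alpha_\Sigma(\bar x)\wedge\exists y\big(\bigwedge_i\dist(y,x_i)>2t+1\wedge\gamma_\Sigma(y)\big)$ with $\gamma_\Sigma$ a $t$-local formula around $y$. It remains to normalise the ``sentence-like'' quantifier $\exists y\big(\bigwedge_i\dist(y,x_i)>2t+1\wedge\gamma(y)\big)$ for $t$-local $\gamma$. Here one uses the basic local sentences $\theta_m:=\exists y_1\cdots y_m\big(\bigwedge_{j\neq j'}\dist(y_j,y_{j'})>2(2t+1)\wedge\bigwedge_j\gamma(y_j)\big)$, of locality radius $2t+1$: if $\theta_{k+1}$ holds, with $k=|\bar x|$, then since each $x_i$ lies within $2t+1$ of at most one of $k+1$ pairwise-far $\gamma$-points, some $\gamma$-point escapes $B^{2t+1}(\bar x)$ and the formula holds; and when $\theta_{k+1}$ fails the $\gamma$-points fall into at most $k$ bounded-diameter clusters, so whether one escapes $B^{2t+1}(\bar x)$ can be decided from the truth values of the $\theta_m$ together with the $(3t+1)$-local-around-$\bar x$ counting formulas $\exists^{=n}y\big(\dist(y,\bar x)\leq 2t+1\wedge\gamma(y)\big)$. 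Collecting all the pieces, a careful accounting of radii (none exceeding $7t\leq 7^{q+1}$) completes the induction.

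The step I expect to be the main obstacle is the last one: disentangling ``there is a $\gamma$-point far from $\bar x$'' into a genuine Boolean combination of basic local sentences and local-around-$\bar x$ formulas, and doing this economically enough that the locality radius stays within $7^{q+1}$. This counting-of-scattered-points argument is the combinatorial heart of Gaifman's theorem.
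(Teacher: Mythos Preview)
The paper does not give its own proof of this theorem: it is stated as background and simply attributed to \cite[Theorem~2.5.1]{ebbinghaus}. Your sketch is the standard inductive proof from that reference, and the outline is correct; there is nothing in the paper to compare it against.
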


We say that a formula $\phi$ is preserved by homomorphisms (resp. extensions) over a class of structures $\C$ if for all $A,B \in \C$ such that there is a homomorphism (resp. embedding) from $A$ to $B$, $A \models \phi$ implies that $B \models \phi$. We say that a class of structures $\C$ has the \emph{homomorphism preservation property} (resp. \emph{extension preservation property}) if for every formula $\phi$ preserved by homomorphisms (resp. extensions) over $\C$ there is an existential-positive (resp. existential) formula $\psi$ such that $M \models \phi \iff M \models \psi$ for all $M \in \C$. 

Given a formula $\phi$ and a class of structures $\C$, we say that $M \in \C$ is a \emph{minimal induced model} of $\phi$ in $\C$ if $M\models \phi$ and for any proper induced substructure $N$ of $M$ with $N \in \C$ we have $N \not\models \phi$. The relationship between minimal models and preservation theorems is highlighted by the following lemma, which is standard, and combines~\cite[Theorem~3.1]{atseriaspreservation} and~\cite[Lemma~2.1]{extensionspreservation}

\begin{lemma}\label{lem:minimalmodels}
    Let $\C$ be a hereditary class of finite structures. Then a sentence preserved under homomorphisms (resp. extensions) in $\C$ is equivalent to an existential-positive (resp. existential) sentence over $\C$ if and only if it has finitely many minimal induced models in $\C$.
\end{lemma}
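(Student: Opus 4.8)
The plan is to prove Lemma~\ref{lem:minimalmodels} by establishing both directions, treating the homomorphism and extension cases in parallel since the arguments are structurally identical up to the choice of fragment (existential-positive versus existential) and of morphism (homomorphism versus embedding). For the easy direction, suppose $\phi$ is preserved under homomorphisms (resp.\ extensions) over $\C$ and is equivalent over $\C$ to an existential-positive (resp.\ existential) sentence $\psi$. Any existential(-positive) sentence of quantifier rank $q$ is a disjunction of sentences asserting the existence of a tuple realising some fixed positive (resp.\ quantifier-free) diagram on at most $q$ elements; hence if $M \models \psi$ there is a witnessing tuple spanning a weak (resp.\ induced) substructure $N_0 \subseteq M$ on at most $q$ elements with $N_0 \models \psi$. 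By hereditariness $N_0 \in \C$, so among all induced substructures of $M$ lying in $\C$ and satisfying $\phi$ there is a minimal one, and it has size bounded by $q$; since there are, up to isomorphism, only finitely many $\tau$-structures of size at most $q$, there are finitely many minimal induced models.

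For the substantive direction, assume $\phi$ is preserved under homomorphisms (resp.\ extensions) over $\C$ and has finitely many minimal induced models $M_1, \dots, M_k$ in $\C$, and let $q$ bound their sizes. The aim is to show $\phi$ is equivalent over $\C$ to $\psi := \bigvee_{i=1}^k \chi_i$, where $\chi_i$ is the existential-positive (resp.\ existential) sentence stating that there is a tuple inducing a copy of $M_i$ — more precisely, $\chi_i = \exists \bar x\, \delta_i(\bar x)$ where $\delta_i$ is the conjunction of the atomic diagram (resp.\ atomic diagram together with the negated atomic facts, i.e.\ the full quantifier-free diagram) of $M_i$ enumerated by $\bar x$. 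If $M \in \C$ and $M \models \chi_i$, then in the homomorphism case the witnessing tuple gives a homomorphism $M_i \to M$ (the positive diagram is satisfied), and in the extension case an embedding $M_i \hookrightarrow M$; since $M_i \models \phi$ and $\phi$ is preserved, $M \models \phi$. Conversely, if $M \in \C$ and $M \models \phi$, then the collection of induced substructures of $M$ that lie in $\C$ (using hereditariness) and satisfy $\phi$ is nonempty and finite, so it has a minimal element $N$; this $N$ is then a minimal induced model of $\phi$ in $\C$, hence isomorphic to some $M_i$, and the inclusion $N \hookrightarrow M$ witnesses $M \models \chi_i$. Thus $M \models \psi$, completing the equivalence.

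The step that requires the most care — and where the homomorphism and extension cases genuinely diverge — is verifying that a minimal model $N$ as extracted above really is a \emph{minimal induced model} in the technical sense of the lemma, namely that every \emph{proper} induced substructure $N'$ of $N$ with $N' \in \C$ fails $\phi$. Here one must recall that ``proper'' was defined via non-fullness of the inclusion rather than via cardinality: for the extension case this is automatic since a strictly smaller induced substructure has a non-surjective, hence non-full, inclusion, but one should double-check that the minimality we obtained (no strictly smaller induced $\C$-substructure satisfies $\phi$) indeed implies minimality against all proper induced substructures, which for embeddings it does because any induced substructure of $N$ that is all of $N$ on the underlying set is $N$ itself. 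For the homomorphism case the subtlety is essentially absent at this point because we work with induced substructures throughout; the fullness/properness distinction is what matters in the \emph{applications} of the lemma rather than in its proof. I would therefore state the bounded-witness observation carefully once, note that hereditariness is used precisely to keep all extracted substructures inside $\C$, and otherwise let the two cases run together with ``(resp.)'' annotations; citing \cite[Theorem~3.1]{atseriaspreservation} and \cite[Lemma~2.1]{extensionspreservation} for the original formulations, the only new content is the uniform packaging.
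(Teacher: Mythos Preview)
Your approach is correct and essentially identical to the paper's: both directions proceed via canonical queries/diagrams of the minimal models and the extraction of a bounded-size witness from an existential(-positive) sentence, using hereditariness to keep extracted substructures inside $\C$. Two small slips worth tightening: in the ``easy'' direction take the \emph{induced} substructure on the witnessing tuple in both cases (hereditariness does not give you weak substructures, and the induced substructure still satisfies the positive diagram), and the size bound should be the maximum number of variables in the primitive-positive disjuncts rather than the quantifier rank (consider $(\exists x\,P(x))\wedge(\exists y\,Q(y))$); neither affects the argument's validity. Your third paragraph's worry about ``proper'' via non-fullness is harmless but unnecessary: for induced substructures the inclusion is full iff it is surjective, so ``proper induced substructure'' just means a strictly smaller domain.
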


\begin{proof}
    We provide a proof for preservation under homomorphisms, as proof for the other case is very similar. Suppose that $\phi$ has finitely many minimal induced models in $\C$, say $M_1,\dots,M_n$. Let $\psi_i$ be the canonical query of each $M_i$, and write $\psi:= \bigvee_{i \in [n]} \psi_i$. We argue that $\phi$ is equivalent to $\psi$ over $\C$. Indeed, if $A \in \C$ models $\phi$ then $A$ contains a minimal induced model $B$ of $\phi$ as an induced substructure. By hereditariness $B \in \C$. Hence, $B$ is isomorphic to some $M_i$. Since there is clearly a homomorphism $B \to A$ it follows that $A \models \psi$. On the other hand if $A \models \psi$, then some $M_i$ has a homomorphism to $A$. Since $M_i \models \psi$ and $\psi$ is preserved by homomorphisms this implies that $A \models \phi$ as required.

    Conversely, assume that $\phi$ is equivalent to an existential positive sentence over $\C$. In particular, $\phi$ is equivalent to some disjunction $\bigvee_{i \in [n]} \psi_i$ where each $\psi_i$ is primitive positive. Let $M_i$ be the canonical database of $\psi_i$. Now, if $A$ is a minimal induced model of $\phi$ then in particular $A \models \psi_i$ for some $i \in [n]$, i.e. there is a homomorphism $h:M_i \to A$. If $h$ is not surjective, then $A[h[M_i]]$ is a proper induced substructure of $A$, which is in $\C$ by hereditariness, and models $\phi$; this contradicts the minimality of $A$. Hence, the size of every minimal induced model of $\phi$ in $\C$ is bounded by $\max_{i \in [n]} |M_i|$. It follows that $\phi$ can have only finitely many minimal induced models in $\C$. 
\end{proof}

\begin{corollary}\label{cor:implies}
Let $\C$ be a hereditary class of finite structures. If $\C$ has the extension preservation property, then $\C$ has the homomorphism preservation property. 
\end{corollary}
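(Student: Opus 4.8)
The plan is to exploit the fact that the two preservation notions meet through the common notion of a minimal induced model appearing in Lemma~\ref{lem:minimalmodels}, together with the triviality that every embedding is a homomorphism. So fix a sentence $\phi$ that is preserved by homomorphisms over $\C$. Since any embedding $A \to B$ between members of $\C$ is in particular a homomorphism, $\phi$ is also preserved by extensions over $\C$.

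Now I would invoke the hypothesis: because $\C$ has the extension preservation property, $\phi$ is equivalent over $\C$ to an existential sentence. Applying the ``only if'' direction of Lemma~\ref{lem:minimalmodels} in its extension form, this gives that $\phi$ has only finitely many minimal induced models in $\C$.

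Finally, since $\phi$ is preserved by homomorphisms over $\C$ and has finitely many minimal induced models in $\C$, the ``if'' direction of Lemma~\ref{lem:minimalmodels} in its homomorphism form produces an existential-positive sentence $\psi$ equivalent to $\phi$ over $\C$. As $\phi$ was an arbitrary homomorphism-preserved sentence, $\C$ has the homomorphism preservation property.

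There is essentially no obstacle here; the only point requiring a moment's care is that the set of minimal induced models of $\phi$ in $\C$ referenced in the two halves of Lemma~\ref{lem:minimalmodels} is literally the same set — its definition mentions neither homomorphisms nor embeddings — so that finiteness obtained in the extension setting may be re-used directly in the homomorphism setting. Hereditariness of $\C$ enters only through the two applications of Lemma~\ref{lem:minimalmodels}.
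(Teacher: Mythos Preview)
Your proof is correct and follows essentially the same approach as the paper: observe that homomorphism preservation implies extension preservation, apply the hypothesis to get an existential equivalent, then use both directions of Lemma~\ref{lem:minimalmodels} to pass through finiteness of minimal induced models and obtain an existential-positive equivalent. Your additional remark that the notion of minimal induced model is independent of which preservation notion is in play is a helpful clarification, but the argument is otherwise identical to the paper's.
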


\begin{proof}
    If a formula $\phi$ is preserved by homomorphisms then, in particular, it is preserved by extensions. It follows that $\phi$ is equivalent to an existential sentence over $\C$, and so by \Cref{lem:minimalmodels} it has finitely many minimal induced models in $\C$. Consequently, the same lemma implies that $\phi$ is equivalent to an existential-positive sentence over $\C$ as required. 
\end{proof}

Another immediate consequence of the above is that both preservation properties hold over any class $\C$ that is \emph{well-quasi-ordered} by the induced substructure relation, i.e. for every infinite subclass $\{M_i : i \in I\} \subseteq \C$ there are $i\neq j \in I$ such that either $M_i$ is an induced substructure of $M_j$ or vice versa. In fact, any property (i.e. not necessarily definable by a first-order formula) preserved by homomorphisms (resp. extensions) is equivalent to an existential positive (resp. existential) formula over a well-quasi-ordered class. In particular, this applies to classes of cliques or more generally classes of bounded shrub-depth \cite{ganian2019shrub}.

\section{Preservation can fail on classes of small treewidth}\label{sec:treewidth}

Theorem~4.4 of~\cite{atseriaspreservation} can be paraphrased in the language of this paper as saying that \emph{homomorphism preservation holds over any monotone and addable class of bounded treewidth}.
In this section we provide a simple counterexample to this, exhibiting a monotone and addable class of graphs of treewidth $3$ where homomorphism preservation fails. More generally, this contradicts Corollary~3.3 of \cite{atseriaspreservation} and Theorem~9 of \cite{dawar2010homomorphism}. To witness failure of preservation, we must exhibit the relevant class, a formula preserved by homomorphisms over this class, and an infinite collection of minimal induced models in the class. We then conclude by \Cref{lem:minimalmodels}.

\begin{definition}
     Fix $k \in \N$ and $n_i \geq 3$ for every $i \in [k]$. We define the \emph{bouquet of cycles of type $(n_1,\dots,n_k)$}, denoted by $W_{n_1,\dots,n_k}$, as the graph obtained by taking the disjoint union of $k$ cycles of length $n_1,\dots,n_k$ respectively, and adding an apex vertex, i.e. a vertex adjacent to every vertex in these cycles. Whenever $k = 1$, we refer to the graph $W_n$ as the \emph{wheel of order $n$}.
\end{definition}

\begin{figure}[h!]
    \centering
    \begin{tikzpicture}
  
  \node[circle, draw, fill=black, inner sep=0.8pt] (P) at (0,2) {};
  
    \begin{scope}[shift={(-3,0)}]
    \foreach \i in {1,...,6} {
    \node[circle, draw, fill=black, inner sep=0.8pt] (A\i) at ({360/6 * (\i - 1)}:1) {};
  }
    \foreach \i in {1,...,6} {
    \draw (A\i.center) -- (P.center);
    \edef\next{\number\numexpr \i+1}%
    \ifnum \i<6
      \draw (A\i.center) -- (A\next.center);
      \fi
    }
    \draw (A1.center) -- (A6.center);
  \end{scope}

    \begin{scope}[shift={(0,0)}]
    \foreach \i in {1,...,9} {
    \node[circle, draw, fill=black, inner sep=0.8pt] (B\i) at ({360/9 * (\i - 1)}:1) {};
  }
    \foreach \i in {1,...,9} {
    \draw (B\i.center) -- (P.center);
    \edef\next{\number\numexpr \i+1}%
    \ifnum \i<9
      \draw (B\i.center) -- (B\next.center);
      \fi
    }
    \draw (B1.center) -- (B9.center);
\end{scope}
  \begin{scope}[shift={(3,0)}]
    \foreach \i in {1,...,10} {
    \node[circle, draw, fill=black, inner sep=0.8pt] (C\i) at ({360/10 * (\i - 1)}:1) {};
  }
    \foreach \i in {1,...,10} {
    \draw (C\i.center) -- (P.center);
    \edef\next{\number\numexpr \i+1}%
    \ifnum \i<10
      \draw (C\i.center) -- (C\next.center);
      \fi
    }
    \draw (C1.center) -- (C10.center);
  \end{scope}
\end{tikzpicture}
\begin{tikzpicture}[scale=0.7]
  \foreach \i in {1,...,9} {
    \node[circle, draw, fill=black, inner sep=0.8pt] (N\i) at ({360/9 * (\i - 1)}:2) {};
  }
  \node[circle, draw, fill=black, inner sep=0.8pt] (C) at (0,0) {};
  \node[circle, draw=none] (P) at (-3,0) {};
  \foreach \i in {1,...,9} {
    \draw (N\i.center) -- (C.center);
    \edef\next{\number\numexpr \i+1}%
    \ifnum \i<9
      \draw (N\i.center) -- (N\next.center);
      \fi
    }
    \draw (N1.center) -- (N9.center);

\end{tikzpicture}
    \caption{The bouquet of cycles of type $(6,9,10)$ and the wheel of order $9$ respectively.}
    \label{fig:wheelbouquet}
\end{figure}
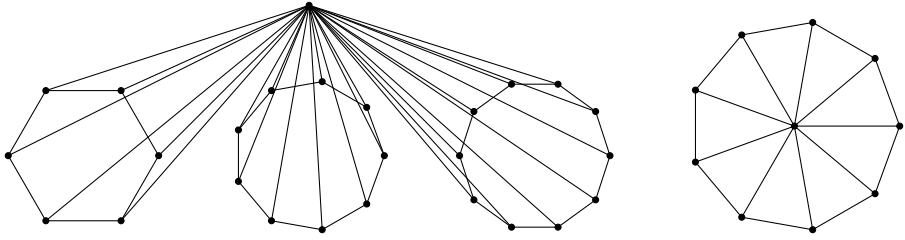

\begin{lemma}\label{lem:full}
Fix $n,m \in N$ odd. Then the wheel $W_n$ has chromatic number $4$, while any proper subgraph of $W_n$ has chromatic number $3$. Consequently, any homomorphism $f:W_n \to W_m$ is full. 
\end{lemma}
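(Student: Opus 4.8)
Write $C_n$ for the rim cycle of $W_n$ and $x$ for its apex, so that $V(W_n)=V(C_n)\cup\{x\}$ and $x$ is adjacent to every vertex of $C_n$. The plan is to prove the two colouring assertions directly and then read off fullness from the standard fact that a homomorphism $G\to H$ forces $\chi(G)\le\chi(H)$ (compose with a proper $\chi(H)$-colouring of $H$, viewed as a homomorphism $H\to K_{\chi(H)}$). For the chromatic number of $W_n$: since $n$ is odd, a proper $3$-colouring of the odd cycle $C_n$ together with a fourth colour on $x$ gives $\chi(W_n)\le 4$; conversely, in any proper colouring the restriction to $C_n$ uses at least three colours (an odd cycle is not bipartite), and $x$, being adjacent to all of $C_n$, must avoid all of them, so at least four colours occur. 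Hence $\chi(W_n)=4$.

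\textbf{Proper subgraphs are $3$-colourable.} I would first reduce to single deletions: any proper subgraph of $W_n$ has either strictly fewer vertices, hence is a subgraph of $W_n-v$ for some vertex $v$, or the same vertex set and strictly fewer edges, hence is a subgraph of $W_n-e$ for some edge $e$; and $3$-colourability is inherited by subgraphs, so it suffices to $3$-colour each $W_n-v$ and each $W_n-e$. Then I would check the cases. If $v=x$, then $W_n-x=C_n$ is $3$-colourable. If $v$ is a rim vertex, then $C_n-v$ is a path, so it is $2$-colourable, and giving $x$ a third colour completes a proper colouring of $W_n-v$. If $e$ is a rim edge, then $C_n-e$ is again a path and the same argument applies. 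Finally, if $e=\{x,v\}$ is a spoke, $2$-colour the path $C_n-v$ with colours $1,2$, give $v$ colour $3$, and give $x$ colour $3$: this is proper because, once the spoke is deleted, $x$ is adjacent only to vertices coloured $1$ or $2$. So every proper subgraph of $W_n$ is $3$-colourable (and since each single-deletion subgraph still contains an odd cycle, each has chromatic number exactly $3$, as stated).

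\textbf{Fullness.} Now let $f\colon W_n\to W_m$ be a homomorphism and let $H$ be its homomorphic image, i.e.\ the subgraph of $W_m$ with vertex set $f(W_n)$ and edge set $\{\,\{f(a),f(b)\}:\{a,b\}\in E(W_n)\,\}$. Unwinding the definitions of Section~\ref{sec:prelim}, $f$ is full precisely when $H=W_m$. Since $f$ corestricts to a homomorphism $W_n\to H$, we get $\chi(H)\ge\chi(W_n)=4$ by the first part. If $f$ were not full, then $H$ would be a proper subgraph of $W_m$, hence $3$-colourable by the second part, contradicting $\chi(H)\ge 4$. Therefore $f$ is full.

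\textbf{Main obstacle.} The argument is entirely elementary, so there is no genuine difficulty; the only steps needing care are the reduction of an arbitrary proper subgraph (which may be deficient in both vertices and edges) to a single vertex- or edge-deletion, and the bookkeeping that identifies ``$f$ is not full'' with ``the homomorphic image of $f$ is a proper subgraph of $W_m$'', which is exactly where the definitions of \emph{full} homomorphism and \emph{proper} substructure are invoked.
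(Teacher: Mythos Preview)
Your proof is correct and follows essentially the same route as the paper's: compute $\chi(W_n)=4$ via the apex/odd-cycle observation, exhibit explicit $3$-colourings of $W_n$ with a single edge removed (splitting into the rim-edge and spoke cases), and then deduce fullness from the fact that the homomorphic image must have chromatic number at least $4$ and hence cannot be a proper subgraph of $W_m$. The only cosmetic differences are that you make the reduction to single deletions explicit and treat vertex deletions separately (the paper simply observes that any proper subgraph misses at least one edge and works from there), and you spell out the link between ``not full'' and ``proper homomorphic image'' more carefully.
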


\begin{proof}
    Fix $n,m \in \N$ odd. It is clear that any proper colouring of $W_n$ must use a unique colour for the apex as it is adjacent to every other node in $W_n$. Moreover, we require an additional three colours to colour the vertices in the odd-length cycle of $W_n$. It follows that $\chi(W_n)=4$. Now, let $W$ be a proper subgraph of $W_n$. It follows that there is at least one edge $(u,v)$ present in $W_n$ which is not in $W$. If $u$ and $v$ are both in the cycle of odd length then we may define a proper $3$-colouring of $W$ by giving $u$ and $v$ the same colour, alternating between this and a second colour along the cycle, and using a final third colour for the apex. If one of $u$ or $v$ is the apex of $W_n$, then we once again colour $u$ and $v$ with the same colour and use an additional two colours to alternate between along the cycle. In particular, it follows that the chromatic number of any homomorphic image of $W_n$ is at least $4$, and so it cannot be a proper substructure of $W_m$. This implies that any homomorphism $f:W_n \to W_m$ is full. 
\end{proof}

The advantage of working with bouquets of cycles is that, unlike single cycles, there is a formula that defines their existence as free induced subgraphs. To see this, we first let
\[ \psi(x,z):=\exists u \exists v [u\neq v \land u \neq x \land v \neq x \land E(z,u) \land E(z,v) \land\forall w( E(w,z) \to w = u \lor w=v \lor w=x)],\]
with which we define 
\[ \phi:= \exists x \exists y [E(x,y) \land \forall z (z \neq x \land \dist(x,z)\leq 2 \to E(x,z) \land \psi(x,z)].\]
Intuitively, $\phi$ asserts the following: ``there is a vertex $x$ of degree at least one such that every other vertex reachable from $x$ by a path of length two is adjacent to $x$ and has exactly two distinct neighbours which are not $x$''.  

\begin{lemma}\label{lem:phi}
    Let $G$ be an arbitrary finite graph. Then $G \models \phi$ if, and only if, it contains a bouquet of cycles as a free induced subgraph. 
\end{lemma}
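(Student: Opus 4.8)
The plan is to show that $G \models \phi$ holds exactly when some connected component of $G$ is a bouquet of cycles, by first decoding the two subformulas syntactically and then reading off the combinatorial content. First I would unwind $\psi(x,z)$: it asserts that $z$ is adjacent to two distinct vertices $u,v$, both different from $x$, and that every neighbour of $z$ lies in $\{u,v,x\}$; equivalently, $z$ has exactly two neighbours apart from (possibly) $x$. With this, $\phi$ reads: there is a vertex $x$ of positive degree (witnessed by $y$ with $E(x,y)$) such that every $z \neq x$ at distance at most $2$ from $x$ is in fact adjacent to $x$ and satisfies $\psi(x,z)$.

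For the ``only if'' direction, suppose $G \models \phi$ as witnessed by $x$, and write $N(x)$ for its neighbourhood. The crucial first step is that $G$ has no vertex at distance exactly $2$ from $x$: if $\dist(x,w) = 2$ then $w \neq x$ and $\dist(x,w) \le 2$, so $\phi$ forces $E(x,w)$, i.e. $\dist(x,w) \le 1$, a contradiction. Hence the connected component of $x$ is exactly $\{x\} \cup N(x)$, and $x$ is adjacent to every vertex of $N(x)$. Next, for each $z \in N(x)$ the formula $\psi(x,z)$ supplies distinct neighbours $u,v \neq x$ of $z$; since $u$ and $v$ are within distance $2$ of $x$ and different from $x$, they lie in $N(x)$, and $N(z) = \{x,u,v\}$. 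So in the induced subgraph $G[N(x)]$ every vertex has degree exactly $2$, whence $G[N(x)]$ is a disjoint union of cycles, each of length $\ge 3$ since $G$ is a simple graph. Re-attaching the apex $x$, the component $G[\{x\}\cup N(x)]$ is precisely a bouquet of cycles, and it is free (a union of connected components is always free) and induced; this is the desired subgraph.

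For the converse, suppose $W := W_{n_1,\dots,n_k}$ sits inside $G$ as a free induced subgraph, with apex $x$. Freeness means $W$ is a union of components of $G$, and since $W$ is connected it is the whole component of $x$; in particular $k \ge 1$, so $x$ has positive degree and we may choose $y$ with $E(x,y)$. Now any $z \neq x$ with $\dist_G(x,z) \le 2$ lies in the component of $x$, hence is a cycle vertex of $W$; such a $z$ is adjacent to $x$ and has exactly the three neighbours $x$, $u$, $v$, where $u \neq v$ are its two cycle-neighbours (distinct because each cycle has length $\ge 3$) and $u,v \neq x$. This is exactly what $E(x,z) \land \psi(x,z)$ asserts, so $G \models \phi$.

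I expect the only place needing genuine care to be the step linking the formula's ``distance $\le 2$'' quantifier to the topological notion of being free: one must verify that the set $\{x\}\cup N(x)$ picked out by $\phi$ really is a union of connected components of $G$, which is precisely what the ``no vertex at distance $2$'' observation delivers. Everything else is routine unwinding of first-order syntax together with the standard fact that a finite $2$-regular graph is a disjoint union of cycles.
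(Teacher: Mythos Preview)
Your proof is correct and follows essentially the same approach as the paper. Both directions match: for the backward direction you and the paper argue identically from freeness that every vertex within distance two of the apex is a cycle vertex; for the forward direction the paper partitions $N(x)$ into its connected components in $G\setminus\{x\}$ and shows each is a free induced cycle, whereas you first observe there is no vertex at distance exactly $2$ from $x$ (so the component of $x$ is $\{x\}\cup N(x)$) and then use $2$-regularity of $G[N(x)]$ --- this is the same argument, reorganised slightly more cleanly.
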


\begin{proof}
    Suppose that $G$ contains a bouquet $W$ of cycles as a free induced subgraph. Then the apex of the bouquet is a vertex of degree at least one, while every vertex reachable from the apex by a path of length two must be in one of the cycles, since $W$ is free in $G$. Since all vertices in the cycles have degree exactly two, not considering the apex itself, it follows that $G\models \phi$.

    Conversely, suppose that $G \models \phi$, and let $x$ be the vertex that is guaranteed to exist by $\phi$. Let $S \subseteq V(G)$ be the vertices that are adjacent to $x$. Since $x$ has degree at least one it follows that $S$ is non-empty. Partition $S$ into $k$ classes $S_1,\dots,S_k$, by putting two vertices in the same class if, and only if, there is a path between them in $G \setminus \{x\}$. We argue that for each $i \in [k]$, $S_i$ is a free induced cycle in $G \setminus \{x\}$. First, notice that since every vertex reachable from $x$ by a path of length two has degree exactly two in $G \setminus \{x\}$, it follows that $S_i$ induces a cycle in $G \setminus \{x\}$. Moreover, there is no $y \in G \setminus (\{x\}\cup S_i)$ which is adjacent to $S_i$. Indeed, if $y \neq x$ is adjacent to some $v \in S_i$ then $y$ is reachable from $x$ by a path of length two. It follows that $y$ is itself adjacent to $x$, and therefore $y$ is in $S$; in particular, $y$ and $v$ are in the same class and so $y \in S_i$. Consequently each $S_i$ is a free induced cycle in $G \setminus \{x\}$, and so the connected component of $x$ is a free induced bouquet of cycles in $G$. 
\end{proof}

We note that going beyond finite graphs to infinite graphs our formula $\phi$ no longer defines bouquets of cycles, as witnessed by a path of infinite length and an apex vertex. Moreover, it is evident that $\phi$ is not preserved by homomorphisms in general as every $W_n$ maps homomorphically to the structure $W_n \cup \{c\}$ with an additional vertex adjacent to the apex, and while the latter contains a bouquet of cycles as an induced subgraph, it does not contain a bouquet of cycles as a free induced subgraph. However, when restricting to subgraphs of disjoint unions of wheels we no longer have non-free-occurring bouquets of cycles in the class. This is precisely the core of the following argument. 

\begin{theorem}\label{thm:wheels}
    Let $\C$ be the closure of $\{W_{2n+1} :n \in \N\}$ under taking subgraphs and disjoint unions. Then homomorphism preservation fails in $\C$. 
\end{theorem}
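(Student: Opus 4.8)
The plan is to apply Lemma~\ref{lem:minimalmodels}: since $\C$ is hereditary (being closed under subgraphs, hence under induced subgraphs), it suffices to exhibit a first-order sentence that is preserved under homomorphisms over $\C$ but that has infinitely many minimal induced models in $\C$. The natural candidate sentence is $\phi$ from Lemma~\ref{lem:phi}, and the natural family of minimal induced models is $\{W_{2n+1} : n \in \N\}$ itself.

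First I would verify that $\phi$ is preserved under homomorphisms \emph{over $\C$}, even though it is not preserved under homomorphisms in general. Let $A, B \in \C$ with $A \models \phi$ and $h : A \to B$ a homomorphism. By Lemma~\ref{lem:phi}, $A$ contains some bouquet of cycles $W$ as a free induced subgraph; since $A$ is a subgraph of a disjoint union of odd wheels and $W$ is a \emph{free} induced subgraph, $W$ must be a disjoint union of odd wheels restricted to one connected component, i.e.\ $W$ is itself (isomorphic to) an odd wheel $W_{2n+1}$, possibly with some edges removed — but freeness and the structure of $\C$ force $W$ to be a genuine odd wheel. Now $h$ restricts to a homomorphism $W_{2n+1} \to B$, and $B$ is a subgraph of a disjoint union of odd wheels; the image $h[W_{2n+1}]$ lands in a single connected component of $B$, which is a subgraph of a single odd wheel $W_{2m+1}$. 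The key point is Lemma~\ref{lem:full}: the chromatic number of $W_{2n+1}$ is $4$, and every proper subgraph of an odd wheel is $3$-colourable, so the homomorphic image $h[W_{2n+1}]$ has chromatic number $\geq 4$ and hence cannot be a proper subgraph of $W_{2m+1}$ — it must be all of $W_{2m+1}$, and in fact $h$ maps $W_{2n+1}$ fully onto that component. Then $B$ contains that copy of $W_{2m+1}$ as a connected component, hence as a free induced subgraph, so $B \models \phi$ by Lemma~\ref{lem:phi}.

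Next I would check that each $W_{2n+1}$ is a minimal induced model of $\phi$ in $\C$. Certainly $W_{2n+1} \models \phi$ by Lemma~\ref{lem:phi}, since $W_{2n+1}$ contains itself as a free induced subgraph. For minimality, let $N$ be a proper induced substructure of $W_{2n+1}$ with $N \in \C$; ``proper'' here means the inclusion is not full, i.e.\ $N$ is obtained by deleting at least one vertex (note that for induced substructures of graphs, non-fullness of the inclusion is exactly vertex deletion, since all edges among retained vertices are kept). Deleting any vertex of $W_{2n+1}$ destroys the only bouquet-of-cycles subgraph available: if we delete the apex we are left with a single odd cycle $C_{2n+1}$, and if we delete a rim vertex we are left with a graph whose unique potential ``$x$'' would be the apex, but now the rim is a path, not a cycle, so the condition fails. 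More carefully, I would argue directly that no induced subgraph of $W_{2n+1}$ other than $W_{2n+1}$ itself contains a bouquet of cycles as a free induced subgraph, using that every cycle in $W_{2n+1}$ either uses the apex (and then is not induced unless it is a triangle, which cannot be free since the apex has other neighbours) or is the full rim $C_{2n+1}$ (which is not free in any proper induced subgraph that still has the apex, and is the whole graph minus apex otherwise). Hence $N \not\models \phi$.

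Finally, since $\{W_{2n+1} : n \in \N\}$ is an infinite family of pairwise non-isomorphic minimal induced models of $\phi$ in $\C$, Lemma~\ref{lem:minimalmodels} shows $\phi$ is not equivalent over $\C$ to any existential-positive sentence, so homomorphism preservation fails in $\C$. The main obstacle I anticipate is the bookkeeping in the preservation argument: one must carefully track how a free induced bouquet of cycles inside a subgraph of a disjoint union of odd wheels is forced to be an entire odd wheel (not merely a subgraph of one), and correctly invoke Lemma~\ref{lem:full} to upgrade ``homomorphism into a component'' to ``full homomorphism onto an odd wheel component of $B$''. A secondary subtlety is handling the case where $A$ or $B$ has components that are proper subgraphs of odd wheels — these contribute nothing to satisfaction of $\phi$, since $\phi$ requires a free induced bouquet, which as argued cannot sit inside a proper subgraph of an odd wheel.
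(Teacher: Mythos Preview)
Your proposal is correct and follows essentially the same approach as the paper: use the sentence $\phi$ of Lemma~\ref{lem:phi}, show via Lemma~\ref{lem:full} that $\phi$ is preserved under homomorphisms over $\C$ (a free induced bouquet in a member of $\C$ must be a full odd wheel, and its homomorphic image inside a component of $B$ is forced by the chromatic-number argument to be a full odd wheel again), observe that each $W_{2n+1}$ is a minimal induced model, and conclude by Lemma~\ref{lem:minimalmodels}. The only cosmetic difference is ordering (the paper proves minimality before preservation) and that the paper phrases the fullness step directly via Lemma~\ref{lem:full} rather than unpacking the chromatic-number inequality, but the content is the same.
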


\begin{proof}
    Let $\phi$ be as above. By \Cref{lem:phi} it follows that every $W_n$ is a model of $\phi$. Clearly, proper subgraphs of $W_n$ cannot possibly contain a bouquet of cycles as a free induced subgraph, and so they cannot model $\phi$ by \Cref{lem:phi}. Consequently, each $W_n$ is a minimal model of $\phi$ in $\C$. 
    
    We now argue that $\phi$ is preserved by homomorphisms in $\C$. Indeed, if some $G \in \C$ is such that $G \models \phi$ then by \Cref{lem:phi} it contains a bouquet of cycles as a free induced subgraph. By the choice of $\C$, this necessarily implies that $G$ contains $W_n$ as a free induced subgraph for some odd $n$. Let $H \in \C$ and $f:G \to H$ be a homomorphism. Then $f$ restricts onto a homomorphism $W_n \to H$ which, by the connectivity of $W_n$ and the fact that $H \in \C$, descends to a homomorphism $\hat{f}:W_n \to W_m$ for some odd $m \in \N$. It follows by \Cref{lem:full} that $\hat{f}$ is full, and therefore $H$ contains $W_m$ as a subgraph. The choice of $\C$ once again ensures that $W_m$ is a free induced subgraph of $H$, and so \Cref{lem:phi} implies that $H \models \phi$ as required.

    We finally conclude by \Cref{lem:minimalmodels} that $\phi$ is not equivalent to an existential-positive formula over $\C$ since it has infinitely many minimal models in $\C$.
\end{proof}

Finally, observe that each graph $W_n$ has treewidth $3$ (in fact even pathwidth $3$). Indeed, taking a tree decomposition of the cycle $C_n$ of width $2$, and adding the apex to every bag in the decomposition gives the required tree decomposition of $W_n$. 

\section{Preservation under bottleneck amalgamation}\label{sec:amalgamation}

The results of \cite{atseriaspreservation} were later extended to classes that are \emph{quasi-wide}. Recall that a class is called quasi-wide if for every $r \in \N$ there exist $k_r \in \N$ and $f_r : \N \to \N$ such that for all $m \in \N$ and $M \in \C$ of size at least $f_r(m)$ there are disjoint sets $A,S \subseteq M$ with $|A|\geq m$, $|S|\leq k_r$ and such that $A$ is $r$-independent in $M \setminus S$. Intuitively, quasi-wideness ensures that, for every choice of distance, we may find in suitably large structures a large set of elements that are pairwise far away after removing a small set of \emph{bottleneck points}. This notion was precisely introduced for the purpose of extending the arguments of \cite{atseriaspreservation} to more general sparse classes, such as classes of bounded expansion, or classes that locally exclude a minor. In particular, Theorem~9 of~\cite{dawar2010homomorphism} asserts (paraphrased into the language of this paper) that: \emph{homomorphism preservation holds over any monotone and addable quasi-wide class}.
Evidently, this is violated by \Cref{thm:wheels} above. Nonetheless, we may salvage the proof by replacing additivity by the stronger assumption of closure under amalgamation over the bottleneck points that witness quasi-wideness.  


The proof proceeds by arguing that any suitably large model $M$ of a sentence $\phi$ preserved by homomorphisms over a class $\C$ satisfying our assumptions, has a proper induced substructure $N$ which also models $\phi$. We thus obtain a concrete bound on the size of minimal models of $\phi$, and conclude by \Cref{lem:minimalmodels}. The existence of this bound is guaranteed by quasi-wideness, as any large enough structure contains a large scattered set after removing a small number of bottleneck points. To isolate the bottleneck points $\bar p$ of $M$ we consider a structure $\bar p M$ in an expanded language which is bi-interpretable with $M$, and work with the corresponding interpretation $\phi^k$ of $\phi$; in particular $\bar p M$ contains a large scattered set itself and it models $\phi^k$. Then, by removing a carefully chosen point from the scattered set of $\bar p M$, we obtain a proper induced substructure $\bar p N $ of $ \bar pM$ such that $N \in \C$ by hereditariness. To argue that this still models $\phi^k$, we use a relativisation of the locality argument of Ajtai and Gurevich from \cite{ajtai}. While in its original version the argument only considers disjoint copies of $M$, working with the interpretation $\bar p M$ of $M$ corresponds to taking free amalgams of $M$ over the set of bottleneck points; this is precisely the subtlety that was missed in \cite{atseriaspreservation} and \cite{dawar2010homomorphism}. 

We now define the structure $\bar p M$; in the following we only consider the case of undirected graphs for simplicity. For arbitrary relational structures the idea is analogous, in that we isolate the tuple $\bar p$ by forgetting any relation that contains some $p_i$, and introduce new relation symbols of smaller arities to recover the forgotten relations. 

\begin{definition}
    Fix $k \in \N$, and let $\sigma=\{E,P_1,\dots,P_k,Q_1,\dots,Q_k\}$ be the expansion of the language of graphs with $2k$ unary predicates. Given a graph $G=(V,E)$ and a tuple $(p_1,p_2,\dots,p_k) \in V^k$, define the $\sigma$-structure $\bar p G$ on the same domain $V$ such that for all $i \in [k]:$
    \begin{itemize}
        \item $E^{\bar p G} = \{(u,v) \in E: u,v \notin \{p_1,\dots,p_k\}\}$;
        \item $P_i^{\bar p G} = \{p_i\}$;
        \item $Q_i^{\bar p G} = \{v \in V: (p_i,v) \in E\}.$

    \end{itemize}
    
    Consider the formula $\epsilon(x,y):= \bigvee_{i \in [k]} (P_i(x) \land Q_i(y)) \lor E(x,y)$. Given a sentence $\phi$, write $\phi^k$ for the $\sigma$-sentence obtained by $\phi$ by replacing every atom $E(x,y)$ by $(\epsilon(x,y) \lor \epsilon(y,x))$. It is then clear that for every $G=(V,E)$ and $\bar p \in V^k$ as above:
    \[ G \models \phi \iff \bar pG \models \phi^k.\]
\end{definition}

With this, we turn to our main theorem in this section. Instead of proving a base case and invoking that for the interpretation step as in \cite{extensionspreservation}, \cite{atseriaspreservation}, and \cite{dawar2010homomorphism}, we opt for a direct proof to illustrate the relevance of our assumptions on the class. 

\begin{theorem}\label{thm:main}
Let $\C$ be a hereditary class such that for every $r \in \N$ there exist $k_r \in \N$ and $f_r:\N\to \N$ so that for every $m \in \N$ and $M \in \C$ of size at least $f_r(m)$ there exist disjoint sets $A,S \subseteq M$ with:
\begin{itemize}
    \item $|A|\geq m$ and $|S|\leq k_r$;
    \item $A$ being $r$-independent in $M \setminus S$; 
    \item $\oplus^n_S M:= \underbrace{M \oplus_S M \oplus_S \dots \oplus_S M}_\text{$n$ times} \in \C$ for every $n \in \N$.
\end{itemize}

Then homomorphism preservation holds over $\C$.

\end{theorem}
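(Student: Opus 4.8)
The plan is to show that a sentence $\phi$ preserved by homomorphisms over $\C$ has minimal induced models of bounded size, so that \Cref{lem:minimalmodels} applies. Fix such a $\phi$ of quantifier rank $q$, and set $r := 7^q$ so that by Gaifman locality $\phi$ is equivalent to a Boolean combination of basic local sentences of locality radius $r$. Let $k := k_r$ and let $f := f_r$ be the quantities witnessing quasi-wideness for radius $r$. Suppose $M \in \C$ is a minimal induced model of $\phi$; the goal is a bound on $|M|$ depending only on $\phi$ (via $q$, $r$, $k$, and the number and width of the basic local sentences involved). Assuming $|M|$ is large, apply quasi-wideness to get disjoint $A, S \subseteq M$ with $|S| \le k$, $A$ large, and $A$ being $r$-independent in $M \setminus S$. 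Pad $S$ to a tuple $\bar p$ of length exactly $k$ and pass to the structure $\bar p M$ in the language $\sigma$, together with the interpreted sentence $\phi^k$; by the displayed equivalence $\bar p M \models \phi^k$. The key point is that in $\bar p M$ the set $A$ is genuinely $r$-scattered: since the edges through the $p_i$ have been removed from $E^{\bar p M}$ and only re-encoded via the unary $P_i, Q_i$, the Gaifman graph of $\bar p M$ is (an induced subgraph of) $\Gaif(M \setminus S)$ together with isolated-type information on the $p_i$, so distances between elements of $A$ are unchanged and remain $> r$ (indeed $> 2r$ after choosing the quasi-wideness radius to be $2r$ rather than $r$).

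Next I would run the Ajtai–Gurevich argument relativised to $\bar p M$. The substance is: if $A$ is large enough (larger than a threshold determined by the width of the basic local sentences and the number of $r$-ball types over $\sigma$ with parameters $\bar p$), then by pigeonhole many elements $a \in A$ have the same isomorphism type of pointed $r$-ball $(B^r_{\bar p M}(a), a)$. Pick such an $a$ and let $\bar p N := \bar p M[\,\bar p M \setminus \{a\}\,]$, i.e. delete $a$. Then $N := M \setminus \{a\} \in \C$ by hereditariness, and deleting $a$ is a proper induced substructure step (the inclusion is not full, since $a$ lies in some relation of $M$ — here we may assume $M$ has no isolated points after a harmless reduction, or argue directly that a minimal model with an isolated point would already be small). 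The claim is $\bar p N \models \phi^k$, which by the equivalence gives $N \models \phi$, contradicting minimality of $M$ and forcing $|M| < f_r(\text{threshold})$. To prove $\bar p N \models \phi^k$ it suffices, by Gaifman locality applied in the language $\sigma$, to check each basic local sentence in the Boolean combination: a basic local sentence $\exists x_1 \dots x_\ell(\bigwedge_{i\ne j}\dist(x_i,x_j) > 2r \wedge \bigwedge_i \psi^{B^r(x_i)}(x_i))$ holds in $\bar p M$ via some witnesses $b_1, \dots, b_\ell$; because $A$ is $2r$-scattered and large, we can relocate any witness $b_i$ whose $r$-ball meets $a$ to another element $a'$ of $A$ realising the same $r$-ball type, keeping the $b_i$ pairwise $2r$-far and avoiding $a$ — this is the standard "many scattered copies" move, where the free-amalgam hypothesis $\oplus^n_S M \in \C$ is what licenses treating the $r$-balls around distinct elements of $A$ as independent gadgets that can be reshuffled. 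Conversely, a basic local sentence failing in $\bar p M$ continues to fail in the induced substructure $\bar p N$, since induced substructures only lose potential witnesses and preserve $r$-ball types and distances. Hence every basic local sentence, and thus the whole Boolean combination, has the same truth value in $\bar p N$ as in $\bar p M$, giving $\bar p N \models \phi^k$.

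The main obstacle, and the place where the paper's correction over \cite{atseriaspreservation} lives, is justifying the relocation-of-witnesses step: one needs that the $r$-ball around an element $a' \in A$ in $\bar p M$, together with the $r$-balls around the already-chosen witnesses $b_j$ and around all the $p_i$, can be assembled without unintended interference, and that the resulting configuration still sits inside a structure in $\C$. This is exactly where the free amalgam $M \oplus_S M \oplus_S \dots \oplus_S M$ enters: the $2r$-scattered elements of $A$ sit in near-disjoint neighbourhoods that only communicate through $S = \{\bar p\}$, so copying a neighbourhood is copying a summand of the amalgam, and the hypothesis $\oplus_S^n M \in \C$ keeps us inside $\C$. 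I would be careful to (i) track the exact scattering radius needed ($2r$ for the locality radius plus slack for the relativised quantifiers inside $\psi^{B^r}$, so taking the quasi-wideness radius to be, say, $3r$ is safe), (ii) handle the bookkeeping of $r$-ball types in the expanded language $\sigma$ — there are finitely many such types for fixed $q$, $k$, and $\tau$, which bounds the pigeonhole threshold, and (iii) dispatch the degenerate case where $M$ has an isolated point (then $M[\{a\}]$ or its complement already decides matters and $M$ is small). With these in place, $|M|$ is bounded by $f_r$ of an explicit function of $\phi$, so $\phi$ has finitely many minimal induced models in $\C$, and \Cref{lem:minimalmodels} yields that $\phi$ is equivalent over $\C$ to an existential-positive sentence.
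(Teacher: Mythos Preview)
Your outline correctly identifies the overall architecture (bound the size of minimal models via quasi-wideness, pass to $\bar p M$, delete a scattered point), but the core step --- showing $\bar p N \models \phi^k$ --- is where it breaks, and this is exactly where both the amalgam hypothesis and the homomorphism-preservation of $\phi$ must do real work; in your write-up neither is actually used. Your claim that ``a basic local sentence failing in $\bar p M$ continues to fail in the induced substructure $\bar p N$'' is false: basic local sentences are not preserved under induced extensions, since deleting $a$ alters the $r$-ball of nearby vertices and can make a local condition $\psi$ newly true (e.g.\ $\psi(x)$ saying ``$x$ has no neighbour'' on a path). Notice too that your argument, if it worked, would never invoke the hypothesis that $\phi$ is preserved by homomorphisms --- a sure sign something is missing, since without that hypothesis the conclusion is simply false. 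A secondary issue: your pigeonhole is on ``isomorphism type of pointed $r$-ball'', which is not bounded by any function of $q,k,\tau$ alone.

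The paper does not compare $\bar p M$ with $\bar p N$ directly. It builds $N_n := \oplus_S^n N$ and $M_n := M \oplus_S N_n$ (both in $\C$ by the amalgam hypothesis and hereditariness) and compares $\bar p(M_n)$ with $\bar p(N_n)$. The point is that in the $\bar p$-world $\bar p(N_n)$ is a \emph{free} induced summand of $\bar p(M_n)$, so one direction of the basic-local comparison is immediate, and the $n \ge \max_i n_i$ disjoint copies of $N\setminus S$ supply enough scattered replacement witnesses for the other. The passage to and from these amalgams is precisely where preservation is used: the homomorphism $M \to M_n$ gives $M_n \models \phi$, hence $\bar p(M_n) \models \phi^k$; then $\bar p(N_n) \models \phi^k$ gives $N_n \models \phi$, and the homomorphism $N_n \to N$ yields $N \models \phi$. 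The pigeonhole is not on ball types but on the $2^s$ truth-value patterns of the formulas $\Psi_i(x) := \exists y(\dist(x,y) \le r_i \wedge \psi_i^{B^{r_i}(y)}(y))$, yielding two points $u,v \in A$ that agree on all $\Psi_i$; this is what handles the delicate case of a single witness whose ball contains $u$.
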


 \begin{proof}
 Let $\C$ be as above, and fix $\phi$ which is preserved by homomorphisms over $\C$. Denote the quantifier rank of $\phi$ by $q$, and let $r=2\cdot 7^q$. It follows that there is some $k \in \N$ and some $f:\N \to \N$ such that for every $m \in \N$ and every $M$ in $\C$ of size at least $f(m)$, there are disjoint sets $A,S\subseteq M$ such that $|A|\geq m$, $|S|\leq k$, $|A|$ is $r$-independent in $M\setminus S$, and $\oplus_S ^n M \in \C$ for every $n \in \N$.  We consider the formula $\phi^k$: by Gaifman locality, there is a set $\{\phi_1,\dots,\phi_s\}$ of basic local sentences such that $\phi^k$ is equivalent to a Boolean combination of these. For $i \in [s]$, let $r_i$ and $n_i$ be the radius and width of locality respectively of $\phi_i$, and $\psi_i$ its local condition. Observe that $\phi$ and $\phi^k$ have the same quantifier rank, and so $2\cdot \max_{i \in [s]} r_i \leq r$. Set $n:= \max_{i \in [s]} n_i$ and $m:=2^s+1$. We argue that every minimal model $M \in \C$ of $\phi$ has size $<f(m)$.

 Let $M$ be a minimal model of $\phi$, and assume for a contradiction that $|M|\geq f(m)$. It follows that there is a set $S \subseteq M$ of size $k$ such that $M \setminus S$ contains an $r$-independent set $A$ of size $m$. Let $\bar p \in M^k$ be an enumeration of $S$; this implies that $\bar p M \models \phi^k$ and $A$ is an $r$-independent set in $\bar p M$. For each $i \in [s]$ define 
    \[ \Psi_i(x) := \exists y (\dist(x,y)\leq r_i \land \psi_i^{B^{r_i}(y)}(y)). \]
 Since $|A|\geq m= 2^s+1$, it follows that there are at least two vertices $u,v \in A$ satisfying 
 \[ B^{2r_i}_{\bar p M}(u) \models \Psi_i(u) \iff B^{2r_i}_{\bar p M}(v) \models \Psi_i(v) \]
for all $i \in [s]$. Let $N'$ be the substructure of $\bar p M$ induced on $\bar p M \setminus \{u\}$. Since $A$ does not intersect the vertices in $S$, the substructure $N$ of $M$ induced on $M \setminus \{u\}$ satisfies that $N'=\bar p N$. We shall argue that $\bar p N \models \phi^k$ and so $N \models \phi$, contradicting that $M$ is a minimal induced model of $\phi$.
 
By our closure assumptions on $\C$, $N_n:=\oplus_{S}^n N$ and $M_n:=M\oplus_{S} (\oplus_{S}^n N)$ are both in $\C$, as the latter is an induced substructure of $\oplus_S^{n+1} M$. Since there is a homomorphism $M \to M_n$ we obtain that $M_n \models \phi$ and thus $\bar p(M_n)\models \phi^k$. We shall argue that
\[ \bar p(M_n)\models \phi_i \iff \bar p(N_n)\models \phi_i\]
for all basic local sentences $\phi_i$ of $\phi^k$. In particular, this implies that $\bar p(N_n)\models \phi^k$, and so $N_n \models \phi$. Since there is a homomorphism $N_n \to N$, the preservation of $\phi$ implies that $N \models \phi$ as claimed.

Clearly, if $\bar p (N_n) \models \phi_i$ then $\bar p(M_n)\models\phi_i$ by the fact that $\phi_i$ is a local sentence and $\bar p(N_n)$ a free induced substructure of $\bar p(M_n)$. Conversely, if $\bar p(M_n) \models \phi_i$ then there is a $2r_i$-independent subset $X$ of size $n_i$ such that $B^{r_i}_{\bar p(M_n)}(x) \models \psi_i(x)$ for every $x \in X$. Observe that if $X \subseteq S$ then clearly $\bar p(N_n)\models \phi_i$. So, since $S$ is isolated in $\bar p (M_n)$ and $\bar p (N_n)$, we focus on elements of $X \setminus S$, which we may assume to be non-empty. We therefore distinguish two cases. 

    If $|X \setminus S| > 1$ then, by the $2r_i$-independence of $X$, there is at least one $x \in X\setminus S$ such that $u \notin B_{\bar p(M_n)}^{r_i}(x)$. It follows that the $r_i$-ball centered at $x$ is isomorphic to an $r_i$-ball centered at an element in a disjoint copy of $N\setminus S$. Since $\bar p(N_n)$ contains $n\geq n_i$ such copies, it follows that there is a $2r_i$-independent subset $Y$ of $N_n$ of size $n_i$ such that $B^{r_i}_{\bar p(N_n)}(y) \models \psi_i(y)$ for every $y \in Y$, i.e. $\bar p(N_n) \models \phi_i$.
    
    On the other hand if $|X \setminus S| = 1$, let $x$ be the unique element of $X \setminus S$. Clearly if $u \notin B^{r_i}_{\bar p(M_n)}(x)$ then the $r_i$-ball centered at $x$ is isomorphic to the $r_i$-ball centered at an element in $\bar p(N_n)$, and so $\bar p(N_n) \models \phi_i$. If $u \in B^{r_i}_{\bar p(M_n)}(x)$, then $\bar p M \models \Psi_i(u)$, and so by the choice of $u$ and $v$, $\bar p M \models \Psi_i(v)$. Consequently, there is some $y \in B^{r_i}_{\bar p M}(v)$ such that $B^{r_i}_{\bar p M}(v) \models \psi_i(v)$. Observe that because $v$ and $u$ are $2r_i$-independent, $u \notin B_{\bar p(M_n)}^{r_i}(y)$. As before, this implies that $\bar p(B_n) \models \phi_i$. 

    The above implies that there are finitely many minimal induced models of $\phi$ in $\C$, and so we conclude that  $\phi$ is equivalent to an existential-positive formula over $\C$ by \Cref{lem:minimalmodels}. 
 \end{proof}


 Going back to bouquets of cycles, it is easy to see that if a bouquet has more than $m^2\cdot (r+1)$ vertices then after removing the apex it either contains $m$ disjoint cycles or it contains a cycle of size $m\cdot(r+1)$; in either case it contains an $r$-independent set of size $m$. In this case the apex is the only bottleneck point, and so amalgamating over this corresponds to adding more cycles to the bouquet. Consequently, homomorphism preservation holds for the hereditary closure of the class of bouquets of cycles by \Cref{thm:main} above. 
 
   Closure under amalgamation over bottlenecks is a technical condition and one might consider if it could be replaced by more natural conditions.  For example, we could strengthen it by considering closure under arbitrary amalgamation.  However, this is a condition that does not sit well with sparsity requirements.  Indeed, any hereditary class of undirected graphs that is also closed under arbitrary amalgamation contains arbitrarily large $1$-subdivided cliques, and hence, cannot be quasi-wide.
 Nonetheless, there are naturally defined sparse families of structures that satisfy the conditions of Theorem~\ref{thm:main}.   One such class is known to exist by \cite{extensionspreservation}, that is, the class $\mathcal{T}_k$ of all graphs of treewidth bounded by $k$, for any value of $k \in \N$. Indeed, for any suitably large graph of bounded treewidth we may pick a set of bottleneck points that comes from the same bag in a tree decomposition of the graph, and so amalgamating over this set of points does not increase the treewidth. Another naturally defined such class is the class of outerplanar graphs. For our purposes, we may define outerplanar graphs as those omitting $K_4$ and $K_{2,3}$ as minors \cite{outerplanar}. 
 The quasi-wideness of this class follows by the next fact, which moreover permits some control over the bottleneck points. 
 
\begin{theorem}\cite{atseriaspreservation}\label{fact:cliqueminor}
    For every $k,r,m \in \N$ there is an $N=N(k,r,m) \in \N$ such that if $G$ is a graph of size at least $N$ excluding $K_k$ as a minor, then there are disjoint sets $A,S \subseteq V(G)$ with $|A|\geq m$ and $|S| \leq k-2$ such that $A$ is $2r$-independent in $G\setminus S$. Moreover, the bipartite graph $K_{A,S}$ with parts $A$ and $S$ defined by putting an edge between $a \in A$ and $s \in S$ if and only if there is some $u \in B^r_{G\setminus S}(a)$ such that $(u,s) \in E(G)$ is complete. 
\end{theorem}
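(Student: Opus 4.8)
The plan is to prove the statement by induction on $k$, following the original argument of \cite{atseriaspreservation}. For $k \le 2$ the graph $G$ has no edges, so any $m$ of its vertices form an $r$-independent set, and we may take $S = \emptyset$ with $N(2,r,m) = m$, the bipartiteness clause being vacuous. For the inductive step, fix $G$ excluding $K_k$ as a minor with $|G| \ge N(k,r,m)$, for a value of $N$ to be determined from $N(k-1,r,m)$, and split on a dichotomy. If every ball $B^r_G(v)$ has size at most some bound $C = C(k,r,m)$, then a maximal $2r$-independent set $A$ satisfies $V(G) = \bigcup_{a\in A} B^{2r}_G(a)$; since each such $2r$-ball has size at most $C^2$, we get $|A| \ge |G|/C^2 \ge m$ once $N$ is large enough. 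Here $S = \emptyset$ works and there is nothing to check for the bipartite graph.

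Otherwise some ball $B^r_G(v^*)$ is large; note $v^*$ and this ball lie in a single connected component of $G$, in which we now work. Running a breadth-first search from $v^*$ with layers $L_0 = \{v^*\}, L_1, \dots, L_r$, some layer $L_j$ with $1 \le j \le r$ has size at least $(|B^r_G(v^*)|-1)/r$, hence is large. Contracting the connected set $B^{j-1}_G(v^*)$ to a single vertex produces a vertex adjacent to every element of $L_j$; consequently, if $G[L_j]$ contained $K_{k-1}$ as a minor then $G$ would contain $K_k$ as a minor, a contradiction. Thus $G[L_j]$ excludes $K_{k-1}$ as a minor, and --- choosing $N$ so that $L_j$ is large enough --- the induction hypothesis applied to $G[L_j]$ yields disjoint $A \subseteq L_j$ and $S' \subseteq L_j$ with $|A| \ge m$, $|S'| \le k-3$, with $A$ being $2r$-independent in $G[L_j] \setminus S'$, and with the associated bipartite graph complete. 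We set $S := S' \cup \{v^*\}$, so $|S| \le k-2$.

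It remains to check that this choice of $A$ and $S$ works in $G$, which is the technical heart. For the bipartiteness clause: the condition for a point $s' \in S'$ transfers from $G[L_j]\setminus S'$ to $G \setminus S$ because the former is an induced subgraph of the latter, so distances, hence $r$-balls, only grow; and for $v^*$, each $a \in A \subseteq L_j$ reaches $v^*$ along its breadth-first path, which meets $L_j$ only in $a$ and meets $L_0$ only in $v^*$, hence avoids $S$, placing $a$ within distance $r$ in $G \setminus S$ of a neighbour of $v^*$. The delicate point is that $A$ must remain $2r$-independent in $G \setminus S$: deleting only $v^*$ on top of $S'$ re-attaches the layers $L_{j-1}$ and $L_{j+1}$, which could in principle create short paths between elements of $A$ that were far apart inside $G[L_j]$. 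Ruling this out requires choosing the layer $L_j$ with care (so that the surrounding layers are controlled) and book-keeping distances precisely; this is the step I expect to be the main obstacle, and it is the part carried out in detail in \cite{atseriaspreservation}. Finally, tracing the constraints on $N$ through the dichotomy gives a value of $N(k,r,m)$ depending only on $k$, $r$, $m$, and $N(k-1,r,m)$, completing the induction.
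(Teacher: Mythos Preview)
The paper does not prove this statement: it is cited from \cite{atseriaspreservation} and used without proof as a black box in the outerplanar application that follows. There is thus no in-paper argument to compare your attempt against; what you have written is a sketch of the proof from the cited source.

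On its own terms, your outline has the right architecture --- induction on $k$, dichotomy on ball sizes, passage to a BFS layer that excludes $K_{k-1}$ via contracting the inner ball --- and you correctly identify the crux: $2r$-independence of $A$ in $G[L_j]\setminus S'$ need not persist in $G\setminus(S'\cup\{v^*\})$, since short paths in the ambient graph can escape the layer through $L_{j\pm1}$ and beyond. You are candid that this is the obstacle and defer to \cite{atseriaspreservation} for its resolution, which is reasonable for a cited result. One caution: phrasing the fix as ``choosing $L_j$ with care'' is optimistic, since no choice of a single thin layer by itself controls such detours; the original argument has to do more than that. The remaining pieces of your sketch (the base case, the $|B^{2r}(v)|\le C^2$ covering bound in the small-balls branch, and the verification of the bipartite completeness clause for the new bottleneck $v^*$ via BFS paths) are correct.
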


 \begin{theorem}
     Homomorphism preservation holds for the class of outerplanar graphs. 
 \end{theorem}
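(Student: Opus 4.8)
The plan is to deduce the statement from \Cref{thm:main}, so the task reduces to verifying that the class $\C$ of outerplanar graphs satisfies its hypotheses. Hereditariness is immediate: since $\C$ consists of the graphs with no $K_4$ and no $K_{2,3}$ minor, it is minor-closed, hence closed under induced subgraphs. For the quasi-wideness-with-amalgamation condition, fix $r \in \N$; the starting point is \Cref{fact:cliqueminor} applied with $k = 4$, which provides a bound $N(4, r, \cdot)$ such that every outerplanar graph $M$ of size at least $N(4, r, m')$ admits disjoint sets $A, S \subseteq M$ with $|A| \ge m'$, $|S| \le 2$, with $A$ being $2r$-independent (in particular $r$-independent) in $M \setminus S$, and --- crucially --- with the bipartite graph $K_{A,S}$ complete.

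The key step, which is what makes the amalgamation clause tractable, is to upgrade $|S| \le 2$ to $|S| \le 1$ whenever $|A| \ge 3$. Suppose instead that $S = \{s_1, s_2\}$, and pick distinct $a_1, a_2, a_3 \in A$. Each ball $B^r_{M \setminus S}(a_i)$ induces a connected subgraph of $M$ that avoids $s_1$ and $s_2$, and by $2r$-independence the three balls are pairwise disjoint. Contracting each ball to a single vertex $b_i$ yields a minor of $M$ in which, by completeness of $K_{A,S}$, every $b_i$ is adjacent to both $s_1$ and $s_2$; discarding everything except $b_1, b_2, b_3, s_1, s_2$ and the six edges between $\{b_1, b_2, b_3\}$ and $\{s_1, s_2\}$ then exhibits $K_{2,3}$ as a minor of $M$, contradicting outerplanarity. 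Hence we may always arrange $|S| \le 1$.

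It remains to check that $\oplus^n_S M \in \C$ for every $n$, for this $S$. If $S = \emptyset$, then $\oplus^n_S M$ is a disjoint union of copies of $M$, which is outerplanar since $K_4$ and $K_{2,3}$ are connected. If $S = \{s\}$, then $\oplus^n_S M$ is formed by identifying the vertex $s$ across $n$ copies of $M$, so $s$ is a cut vertex separating those copies, and hence every block of $\oplus^n_S M$ is a block of one of the copies, i.e.\ (isomorphic to) a block of $M$. As $K_4$ and $K_{2,3}$ are $2$-connected, any minor of $\oplus^n_S M$ isomorphic to one of them would already be a minor of a single block, hence a minor of the outerplanar graph $M$ --- impossible. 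So $\oplus^n_S M$ is outerplanar. Taking $k_r := 1$ and $f_r(m) := N(4, r, \max(m, 3))$, all hypotheses of \Cref{thm:main} are met, which gives the result. I expect the one genuine obstacle to be exactly this amalgamation requirement: amalgamating outerplanar graphs over a pair of vertices need not be outerplanar, so one cannot use the bottleneck set of \Cref{fact:cliqueminor} directly; the way around it is the observation that forbidding $K_{2,3}$ forces $|S| \le 1$, after which the amalgam is just a $1$-sum and stays in the class. The remaining verifications --- connectedness of balls, disjointness from $2r$-independence, and the fact that a $2$-connected minor lives inside a single block --- are routine.
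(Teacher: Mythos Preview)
Your proposal is correct and follows exactly the paper's approach: apply \Cref{fact:cliqueminor} with $k=4$, use the forbidden $K_{2,3}$ minor together with completeness of $K_{A,S}$ to force $|S|\le 1$, observe that $1$-point free amalgams of outerplanar graphs remain outerplanar, and conclude via \Cref{thm:main}. You simply spell out in detail the two steps the paper leaves as one-line remarks --- the contraction of the disjoint $r$-balls to exhibit the $K_{2,3}$ minor, and the block decomposition argument showing that a $2$-connected minor of a $1$-sum lives in a single summand --- and you are careful to take $f_r(m)=N(4,r,\max(m,3))$ so that $|A|\ge 3$ is available for the $K_{2,3}$ step.
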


  \begin{proof}
     Since outerplanar graphs are $K_4$-minor-free, it follows by \Cref{fact:cliqueminor} that for every $r,m \in \N$ there exists an $N=N(r,m) \in \N$ such that if $G$ is an outerplanar graph of size at least $N$, then there are disjoint sets $A,S \subseteq V(G)$ with $|A|\geq m, |S|\leq 2$, $A$ $r$-independent in $G\setminus S$, and $K_{A,S}$ complete. Since outerplanar graphs also forbid $K_{2,3}$ as a minor, this implies that $|S|\leq 1$. It is then clear that for any such $G$ and $S \subseteq V(G)$, the graph $\oplus_S ^n G$ is still outerplanar for all $n \in \N$, as no $1$-point free amalgams can create $K_4$ or $K_{2,3}$ minors. We thus conclude by \Cref{thm:main}.
 \end{proof}

Interestingly, the examples exhibited above are in fact \emph{almost-wide}, that is, the number of the bottleneck points does not depend on the radius of independence. It would be interesting to find natural quasi-wide classes which are not almost-wide, and which are closed under bottleneck amalgamation. One potential candidate might be the class of all graphs whose local treewidth is bounded by the same constant. 

\section{Homomorphism preservation fails on planar graphs}\label{sec:planar}

In this section we witness that homomorphism preservation fails on the class of planar graphs. Previously, it was established \cite{extensionspreservation} that the extension preservation property fails on planar graphs. Since extension preservation implies homomorphism preservation on hereditary classes by \Cref{cor:implies}, our result strengthens the above. Recall that by Wagner's theorem a graph is planar if and only if it omits $K_{3,3}$ and $K_5$ as minors. 
Our construction will in fact also reveal that homomorphism preservation fails on the class of $K_5$-minor-free graphs. 

\begin{definition}
Fix $n \in \N$. Define $G_n$ as the undirected graph on vertex set $V(G_n)=\{v_1,v_2\}\cup\{a_i : i \in [n]\} \cup \{b_i : i \in [n]\}$ and edge set
\[ E(G_n)=\{(v_1,a_i):i \in [n]\} \cup \{(v_2,b_i): i \in [n]\} \cup \{(a_i,b_i): i \in [n]\}\]\[\cup \{(a_i,a_{i+1}): i \in [n-1] \} \cup \{ (b_i,b_{i+1}): i \in [n-1]\} \cup \{(a_{i+1},b_i): i \in [n-1]\}.\]
We define $D_n$ as the extension of $G_n$ on the same vertex set, with 
\[ E(D_n) = E(G_n) \cup \{(a_1,a_n),(b_1,b_n),(a_1,b_n)\}.\]
In addition, we define $A_n$ as the graph obtained from $G_n$ by taking the quotient over the equivalence relation generated by $(a_1,a_n)$, and we write $\alpha_n:G_n \to A_n$ for the corresponding quotient homomorphism. Likewise, we let $B_n := G_n / (a_1,b_n)$ and $C_n:= G_n / (b_1,b_n)$, and write $\beta_n:G_n \to B_n$ and $\gamma_n:G_n \to C_n$ for the respective quotient homomorphisms. 
\end{definition}

\begin{figure}[h!]
    \centering
    \begin{tikzpicture}[scale=0.75]

    \node[circle, draw, fill=black, inner sep=0.8pt,label=above:$v_1$] (X) at (0,2.2) {};
    \node[circle, draw, fill=black, inner sep=0.8pt,label=below:$v_2$] (Y) at (0,-2.2) {};
      \foreach \i in {1,...,9} {
    \node[circle, draw, fill=black, inner sep=0.8pt] (N\i) at (-5+\i,0.6) {};
  }
  \foreach \i in {1,...,9} {
    \node[circle, draw, fill=black, inner sep=0.8pt] (M\i) at (-5+\i,-0.6) {};
  }
    \node[circle, draw=none] (P) at (0,3) {};
    \node[circle, draw=none] (T) at (0,-3) {};
    \node[circle, draw=none] (W2) at (-0.5,-4) {};

  
      \foreach \i in {1,...,9} {
    \draw (X.center) -- (N\i.center);
    \draw (Y.center) -- (M\i.center);
    \edef\next{\number\numexpr \ifnum \i<9 \i+1 \else \i\fi}
    \draw (N\i.center) -- (N\next.center);
    \draw (M\i.center) -- (M\next.center);
    \draw (M\i.center) -- (N\next.center);
    \draw (M\i.center) -- (N\i.center);
  }
\end{tikzpicture}
    \begin{tikzpicture}[scale=0.75]

    \node[circle, draw, fill=black, inner sep=0.8pt,label=above:$v_1$] (X) at (0,2.2) {};
    \node[circle, draw, fill=black, inner sep=0.8pt,label=below:$v_2$] (Y) at (0,-2.2) {};
      \foreach \i in {1,...,9} {
    \node[circle, draw, fill=black, inner sep=0.8pt] (N\i) at (-5+\i,0.6) {};
  }
  \foreach \i in {1,...,9} {
    \node[circle, draw, fill=black, inner sep=0.8pt] (M\i) at (-5+\i,-0.6) {};
  }
    \node[circle, draw=none] (P) at (0,3) {};
    \node[circle, draw=none] (T) at (0,-3) {};
    \node[circle, draw=none] (W1) at (-5,-0.6) {};
    \node[circle, draw=none] (W2) at (-0.5,-4) {};

  
      \foreach \i in {1,...,9} {
    \draw (X.center) -- (N\i.center);
    \draw (Y.center) -- (M\i.center);
    \edef\next{\number\numexpr \ifnum \i<9 \i+1 \else \i\fi}
    \draw (N\i.center) -- (N\next.center);
    \draw (M\i.center) -- (M\next.center);
    \draw (M\i.center) -- (N\next.center);
    \draw (M\i.center) -- (N\i.center);
  }
    \draw (N9.center) edge[out=110, in=0] (P.center);
    \draw (N1.center) edge[out=70, in=180] (P.center);
    \draw (M9.center) edge[out=250, in=0] (T.center);
    \draw (M1.center) edge[out=290, in=180] (T.center);
    \draw (M9.center) edge[out=250, in=0] (W2.center);
    \draw (W1.center) edge[out=270, in=180] (W2.center);
    \draw (W1.center) edge[out=90, in=180] (N1.center);
\end{tikzpicture}

\caption{Planar embeddings of $G_9$ and $D_9$ respectively.}
    \label{fig:d9}

\end{figure}
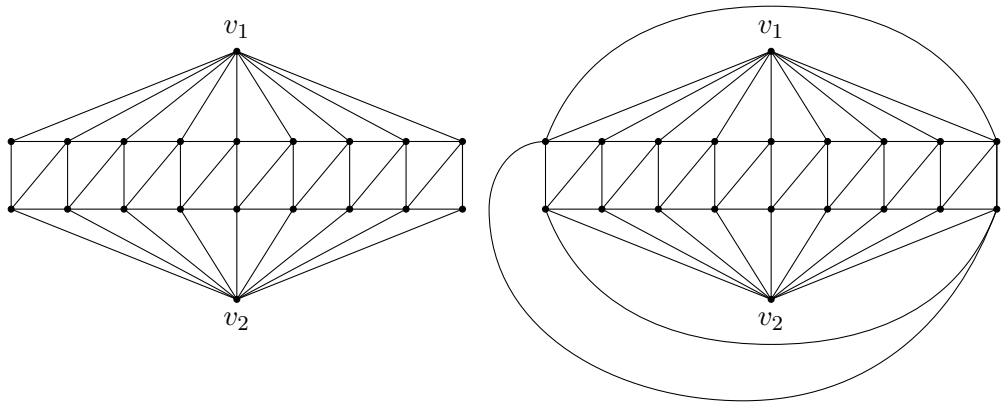

\begin{figure}[h!]
 
    \begin{tikzpicture}[scale=0.75]

    \node[circle, draw, fill=black, inner sep=0.8pt] (X) at (0,2.2) {};
    \node[circle, draw, fill=black, inner sep=0.8pt] (Y) at (0,-2.2) {};
      \foreach \i in {1,...,5} {
    \node[circle, draw, fill=black, inner sep=0.8pt] (N\i) at (-3+\i,0.6) {};
  }
  \foreach \i in {1,...,5} {
    \node[circle, draw, fill=black, inner sep=0.8pt] (M\i) at (-3+\i,-0.6) {};
  }
    \node[circle, draw=none] (P) at (0,3) {};
    \node[circle, draw=none] (T) at (-0.5,-3) {};
    \node[circle, draw=none] (W1) at (-3.5,-0.6) {};
    \node[circle, draw=none] (W2) at (-0.75,-4) {};
    \node[circle, draw, fill=black, inner sep=0.8pt] (A) at (-3,-0.6) {};
  
      \foreach \i in {1,...,5} {
    \draw (X.center) -- (N\i.center);
    \draw (Y.center) -- (M\i.center);
    \edef\next{\number\numexpr \ifnum \i<5 \i+1 \else \i\fi}
    \draw (N\i.center) -- (N\next.center);
    \draw (M\i.center) -- (M\next.center);
    \draw (M\i.center) -- (N\next.center);
    \draw (M\i.center) -- (N\i.center);
  }
    \draw (N5.center) edge[out=110, in=0] (P.center);
    \draw (N1.center) edge[out=70, in=180] (P.center);
    \draw (M5.center) edge[out=250, in=0] (T.center);
    \draw (A.center) edge[out=290, in=180] (T.center);
    \draw (M5.center) edge[out=250, in=0] (W2.center);
    \draw (W1.center) edge[out=270, in=180] (W2.center);
    \draw (W1.center) edge[out=80, in=190] (N1.center);
    \draw (A.center) -- (N1.center);
    \draw (A.center) -- (Y.center);
    
\end{tikzpicture}
\begin{tikzpicture}[scale=0.7]

    \node[circle, draw, fill=black, inner sep=0.8pt] (X) at (0,2.2) {};
    \node[circle, draw, fill=black, inner sep=0.8pt] (Y) at (0,-2.2) {};
      \foreach \i in {1,...,5} {
    \node[circle, draw, fill=black, inner sep=0.8pt] (N\i) at (-3+\i,0.6) {};
  }
  \foreach \i in {1,...,5} {
    \node[circle, draw, fill=black, inner sep=0.8pt] (M\i) at (-3+\i,-0.6) {};
  }
    \node[circle, draw=none] (P) at (0.5,3) {};
    \node[circle, draw=none] (T) at (-2.5, -0.6) {};
    \node[circle, draw=none] (W1) at (-3,-0.6) {};
    \node[circle, draw=none] (W2) at (-0.5,-3) {};
    \node[circle, draw=none] (W3) at (-0.5,-4) {};
    \node[circle, draw=none] (W4) at (3.15,-4) {};
    \node[circle, draw, fill=black, inner sep=0.8pt] (A) at (3,0.6) {};
  
      \foreach \i in {1,...,5} {
    \draw (X.center) -- (N\i.center);
    \draw (Y.center) -- (M\i.center);
    \edef\next{\number\numexpr \ifnum \i<5 \i+1 \else \i\fi}
    \draw (N\i.center) -- (N\next.center);
    \draw (M\i.center) -- (M\next.center);
    \draw (M\i.center) -- (N\next.center);
    \draw (M\i.center) -- (N\i.center);
    
  }

    \draw (N1.center) edge[out=70, in=180] (P.center);
    \draw (A.center) edge[out=110, in=0] (P.center);
    \draw (A.center) -- (M5.center);
    \draw (A.center) -- (N5.center);
    \draw (A.center) -- (X.center);
    \draw (Y.center) edge[out=180, in=270] (T.center);
    \draw (T.center) edge[out=90, in=200] (N1.center);
    \draw (M5.center) edge[out=270, in=0] (W2.center);
    \draw (W2.center) edge[out=180, in=270] (W1.center);
    \draw (W1.center) edge[out=90, in=180] (N1.center);
 
\end{tikzpicture}
\begin{tikzpicture}[scale=0.7]

    \node[circle, draw, fill=black, inner sep=0.8pt] (X) at (0,2.2) {};
    \node[circle, draw, fill=black, inner sep=0.8pt] (Y) at (0,-2.2) {};
      \foreach \i in {1,...,5} {
    \node[circle, draw, fill=black, inner sep=0.8pt] (N\i) at (-3+\i,0.6) {};
  }
  \foreach \i in {1,...,5} {
    \node[circle, draw, fill=black, inner sep=0.8pt] (M\i) at (-3+\i,-0.6) {};
  }
    \node[circle, draw=none] (P) at (0,3) {};
    \node[circle, draw=none] (T) at (0,-3) {};
    \node[circle, draw=none] (W2) at (0.5,-4) {};
    \node[circle, draw, fill=black, inner sep=0.8pt] (A) at (3,0.6) {};
  
      \foreach \i in {1,...,5} {
    \draw (X.center) -- (N\i.center);
    \draw (Y.center) -- (M\i.center);
    \edef\next{\number\numexpr \ifnum \i<5 \i+1 \else \i\fi}
    \draw (N\i.center) -- (N\next.center);
    \draw (M\i.center) -- (M\next.center);
    \draw (M\i.center) -- (N\next.center);
    \draw (M\i.center) -- (N\i.center);
  }

    \draw (M5.center) edge[out=250, in=0] (T.center);
    \draw (M1.center) edge[out=290, in=180] (T.center);
    \draw (M1.center) edge[out=270, in=180] (W2.center);
    \draw (A.center) edge[out=270, in=0] (W2.center);
    \draw (A.center) -- (M5.center);
    \draw (A.center) -- (N5.center);
    \draw (A.center) -- (X.center);
\end{tikzpicture}

\caption{Planar embeddings of $A_6, B_6,$ and $ C_6$ respectively.}
    \label{fig:abc}
\end{figure}
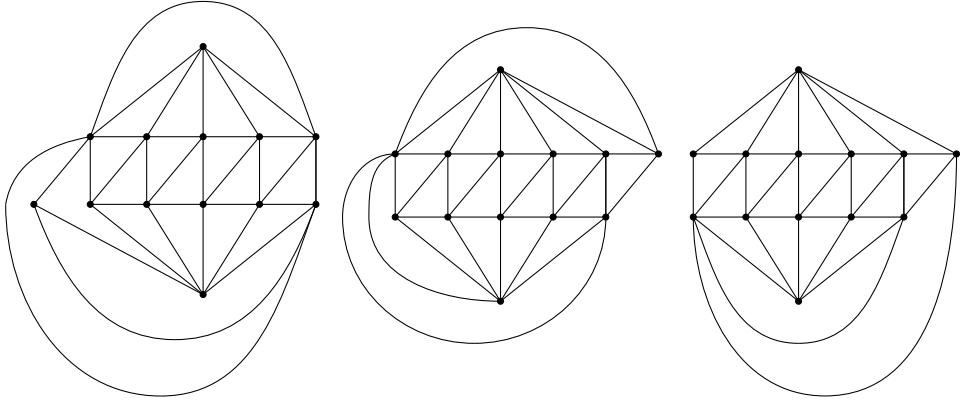

Consider the following observations. First, for every $n \geq 3$ the graphs $G_n,D_n,A_n,B_n,C_n$ are all planar and $4$-chromatic. In particular $D_n,B_n,C_n$ are maximal planar. 
For $n=3$ the graphs $D_n$ and $B_n$ contain a copy of $K_4$, while for $n \geq 4$ they are $K_4$-free. Likewise, for $n\in \{3,4\}$ the graphs $A_n$ and $C_n$ contain a copy of $K_4$, while for $n \geq 5$ they are $K_4$-free. Finally, for $3\leq m \leq n$ there is a homomorphism $\delta_{n,m}:G_n \to D_m$ that ``wraps'' $G_n$ around $D_m$.  Labelling their vertices as above, this satisfies
\[ \delta_{n,m}(v_1)=v_1, \delta_{n,m}(v_2)=v_2, \delta_{n,m}(a_i) = a_{i \bmod m} \text{ and } \delta_{n,m}(b_i) = b_{i \bmod m},\]
for all $i \in [n]$.

We proceed to characterise the $K_5$-minor-free homomorphic images of $D_n$. We argue that these either contain $K_4$, or an induced copy of $D_m$ for some $m \divides n$. The proof proceeds by first characterising the $K_5$-minor-free homomorphic images of $G_n$ by induction, and then using that $G_n$ is a subgraph of $D_n$. While this requires a fair amount of book-keeping, it is not conceptually difficult. The base case of the induction is the following lemma. 

\begin{lemma}\label{lem:g3}
Let $f:G_3 \to H$ be a homomorphism. If $H$ is $K_4$-free then $f$ is injective. 
\end{lemma}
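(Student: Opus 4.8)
I would argue by contraposition: assume $f$ is not injective and produce a copy of $K_4$ inside $H$. The first step is the observation that a homomorphism into a (loopless) graph cannot identify the endpoints of an edge --- if $x,y$ are adjacent in $G_3$ then $f(x),f(y)$ are adjacent in $H$ and hence distinct. Consequently the only way $f$ can fail to be injective is by identifying some \emph{non-adjacent} pair $\{x,y\}$ of vertices of $G_3$, and it suffices to rule this out for each of the finitely many non-edges of $G_3$.

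\textbf{The engine.} The key combinatorial claim is: for every non-edge $\{x,y\}$ of $G_3$ there is a triangle $\{z_1,z_2,z_3\}$ of $G_3$, disjoint from $\{x,y\}$, such that each $z_i$ is adjacent to $x$ or to $y$. Granting this, suppose $f(x)=f(y)=:p$. Since $f$ preserves the three edges of the triangle, $f(z_1),f(z_2),f(z_3)$ are pairwise adjacent in $H$, hence pairwise distinct; and since each $z_i$ is adjacent to $x$ or to $y$, the vertex $p$ is adjacent to each $f(z_i)$ and, by the first step, distinct from it. Thus $\{p,f(z_1),f(z_2),f(z_3)\}$ is a set of four pairwise adjacent vertices of $H$, so $H$ contains $K_4$ --- contradicting the hypothesis. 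Hence $f$ identifies no non-edge either, and is injective.

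\textbf{Verifying the claim, and the main obstacle.} What remains is the claim, which is routine bookkeeping rather than a genuine difficulty. I would record the $8$ vertices $\{v_1,v_2\}\cup\{a_i,b_i:i\in[3]\}$ and the $15$ edges of $G_3$ (so there are exactly $13$ non-edges), and list its triangles --- there are precisely eight, namely $\{v_1,a_1,a_2\}$, $\{v_1,a_2,a_3\}$, $\{v_2,b_1,b_2\}$, $\{v_2,b_2,b_3\}$, $\{a_1,a_2,b_1\}$, $\{a_2,b_1,b_2\}$, $\{a_2,a_3,b_2\}$, $\{a_3,b_2,b_3\}$. Then for each of the $13$ non-edges I would exhibit a certifying triangle from this list: for instance $\{v_1,v_2\}$ is handled by $\{a_1,a_2,b_1\}$ (with $a_1,a_2\sim v_1$ and $b_1\sim v_2$), $\{v_1,b_1\}$ by $\{a_2,a_3,b_2\}$, $\{a_1,a_3\}$ by $\{a_2,b_1,b_2\}$ (with $a_2,b_1\sim a_1$ and $b_2\sim a_3$), $\{a_2,b_3\}$ by $\{v_2,b_1,b_2\}$, $\{a_3,b_1\}$ by $\{v_1,a_1,a_2\}$, and similarly for the rest. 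The bookkeeping can be roughly halved by noting the automorphism $\rho$ of $G_3$ given by $v_1\leftrightarrow v_2$ and $a_i\leftrightarrow b_{4-i}$, which pairs up most of the non-edges. The only thing requiring care is that the enumeration of non-edges is complete and that each chosen triangle genuinely avoids $\{x,y\}$ and has the asserted adjacencies; there is no conceptual content beyond this.
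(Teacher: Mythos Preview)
Your proof is correct. The underlying idea is the same as the paper's: for each pair of non-adjacent vertices $\{x,y\}$ of $G_3$, exhibit three pairwise adjacent vertices each of which is adjacent to $x$ or to $y$, so that identifying $x$ and $y$ forces a $K_4$ in the image. The paper carries this out as a sequential case analysis, progressively enlarging a set on which $f$ is known to be injective and for each new vertex ruling out collisions one by one; you instead isolate the combinatorial property once (``every non-edge of $G_3$ is dominated by a triangle disjoint from it'') and reduce everything to verifying it. Your packaging is tidier and makes the structure of the argument transparent --- in particular the automorphism $\rho$ you note genuinely halves the bookkeeping --- while the paper's sequential walk-through avoids having to enumerate all $13$ non-edges explicitly, since several are handled implicitly by earlier steps. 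Either way the content is a finite check on the eight triangles of $G_3$, and both executions are complete.
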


\begin{proof}
Let $H$ be a $K_4$-free graph, and $f:G_3 \to H$ a homomorphism. Label the vertices of $G_3$ as in the picture below; we shall argue that $f$ is injective.  

\begin{figure}[h!]
    \centering
    \begin{tikzpicture}[scale=0.9]
    \node[circle, draw, fill=black, inner sep=0.8pt,label=above:$v_1$] (X) at (0,2.2) {};
    \node[circle, draw, fill=black, inner sep=0.8pt,label=below:$v_2$] (Y) at (0,-2.2) {};
      \foreach \i in {1,...,3} {
    \node[circle, draw, fill=black, inner sep=0.8pt,label=160:$a_\i$] (N\i) at (-2+\i,0.6) {};
  }
  \foreach \i in {1,...,3} {
    \node[circle, draw, fill=black, inner sep=0.8pt,label=160:$b_\i$] (M\i) at (-2+\i,-0.6) {};
  }
      \foreach \i in {1,...,3} {
    \draw (X.center) -- (N\i.center);
    \draw (Y.center) -- (M\i.center);
    \edef\next{\number\numexpr \ifnum \i<3 \i+1 \else \i\fi}
    \draw (N\i.center) -- (N\next.center);
    \draw (M\i.center) -- (M\next.center);
    \draw (M\i.center) -- (N\next.center);
    \draw (M\i.center) -- (N\i.center);
  }
\end{tikzpicture}
\end{figure}

First, notice that $f$ is injective on $\{v_2,b_1,b_2\}$ since they form a triangle in $G_3$. If $f(a_2)=f(v_2)$ then the set $\{f(v_2),f(b_2),f(a_3),f(b_3)\}$ induces $K_4$ in $H$; it follows that $f(a_2)\neq f(v_2)$, and so $f$ is injective on $\{v_2,b_1,b_2,a_2\}$. Likewise, $f(b_3)\neq  f(a_2)$ as otherwise $\{f(b_1),f(b_2),f(b_3),f(v_2)\}$ induce $K_4$ in $H$. Moreover, $f(b_3)\neq f(b_1)$ as $\{f(b_1),f(b_2),f(a_2),f(a_3)\}$ would otherwise induce $K_4$ in $H$. It follows that $f$ is injective on $\{v_2,b_1,b_2,b_3,a_2\}$. From this we deduce that $f(a_3)\neq f(b_1)$ as otherwise $\{f(v_2),f(b_1),f(b_2),f(b_3)\}$ would induce $K_4$ in $H$, and that $f(a_3)\neq f(v_2)$ as otherwise $\{f(v_2),f(b_1),f(b_2),f(a_2)\}$ would also induce $K_4$. Hence, $f$ is injective on $\{v_2,b_1,b_2,b_3,a_2,a_3\}$.
By symmetry, it follows that $f$ is also injective on $\{v_1,a_1,a_2,a_3,b_1,b_2\}$. Notice that $f(a_1)\neq f(b_3)$ and $f(v_1)\neq f(b_3)$ since otherwise $\{f(a_2),f(a_3),f(b_2),f(b_3)\}$ would induce $K_4$ in $H$. Finally, $f(a_1)\neq f(v_2)$ and $f(v_1)\neq f(v_2)$ as otherwise $\{f(a_2),f(b_1),f(b_2),f(v_2)\}$ would induce $K_4$ in $H$. Putting all the above together, we conclude that $f$ is injective on all of $G_3$ as required. 
\end{proof}

We now proceed to the general case. 

\begin{lemma}\label{prop:gninduction}
 Fix $n \geq 3$. Let $f:G_n \to H$ be a homomorphism, where $H$ is $K_4$-free and $K_5$-minor-free. Then one of the following is true:
    \begin{enumerate}
        \item\label{it:1} $f$ is injective;
        \item\label{it:2} there is some $m \in [4,n-1]$ and an embedding $\hat f: D_m \to H$ such that $f = \hat f \circ \delta_{n,m}$;
        \item\label{it:3} $n \geq 5$ and there is an injective homomorphism $\hat f: A_n \to H$ such that $f = \hat f \circ \alpha_n$;
        \item\label{it:4} $n \geq 4$ and there is an embedding $\hat f: B_n \to H$ such that $f = \hat f \circ \beta_n$;
        \item\label{it:5} $n \geq 5$ and there is an embedding $\hat f: C_n \to H$ such that $f = \hat f \circ \gamma_n$;

    \end{enumerate}
\end{lemma}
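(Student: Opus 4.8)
The plan is to prove the lemma by induction on $n$, the base case $n=3$ being exactly Lemma~\ref{lem:g3}: since $H$ is $K_4$-free that lemma gives $f$ injective, so~(\ref{it:1}) holds. For the inductive step, regard $G_{n-1}$ as the induced subgraph of $G_n$ on its first $n-1$ columns $\{v_1,v_2,a_1,\dots,a_{n-1},b_1,\dots,b_{n-1}\}$ (which is a copy of $G_{n-1}$), put $f_0:=f\restr G_{n-1}$, and apply the induction hypothesis to $f_0$. Two observations are used uniformly. First, the last three columns $\{v_1,v_2,a_{n-2},a_{n-1},a_n,b_{n-2},b_{n-1},b_n\}$ induce a copy of $G_3$, so Lemma~\ref{lem:g3} forces $f$ to be injective on them; in particular $f(a_n)\ne f(b_n)$ and neither of these equals the image of any vertex in columns $n-2$ or $n-1$. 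Second, whenever the analysis would route $f$ through $A_n$, $B_n$, $C_n$, or $D_m$ in a parameter range where that graph contains $K_4$, this is impossible because $H$ is $K_4$-free; this is precisely what produces the side conditions $n\ge 5$ in~(\ref{it:3}),(\ref{it:5}), $n\ge 4$ in~(\ref{it:4}), and $m\ge 4$ in~(\ref{it:2}).

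Now one distinguishes according to the outcome of the induction hypothesis for $f_0$. \emph{If $f_0$ is injective}, then either both $f(a_n),f(b_n)\notin f(G_{n-1})$, in which case $f$ is injective and~(\ref{it:1}) holds; or at least one of them lies in $f(G_{n-1})$, and then the adjacencies $a_n\sim v_1,a_{n-1},b_{n-1},b_n$ and $b_n\sim v_2,a_n,b_{n-1}$, together with $K_4$-freeness and $K_5$-minor-freeness, pin the coincidence down to the ``seam'': one is forced into $f(a_n)=f(a_1)$ (so $f=\hat f\circ\alpha_n$, case~(\ref{it:3})), or $f(b_n)=f(b_1)$ (so $f=\hat f\circ\gamma_n$, case~(\ref{it:5})), or $f(b_n)=f(a_1)$ (so $f=\hat f\circ\beta_n$, case~(\ref{it:4})), or the two coincidences $f(a_n)=f(a_1)$ and $f(b_n)=f(b_1)$ hold together, which says precisely that $f=\hat f\circ\delta_{n,n-1}$ with $\hat f\colon D_{n-1}\hookrightarrow H$, i.e.\ case~(\ref{it:2}) with $m=n-1$. \emph{If $f_0$ is not injective}, the induction hypothesis places it in case~(\ref{it:2}),(\ref{it:3}),(\ref{it:4}), or~(\ref{it:5}) for $G_{n-1}$, i.e.\ $f_0$ factors through $D_m$ ($m\le n-2$), $A_{n-1}$, $B_{n-1}$, or $C_{n-1}$; in each of these the extra column can only continue, respectively close up, the wrapping, and one checks that $f=\hat f\circ\delta_{n,m'}$ for a suitable $m'\in[4,n-1]$ (the same $m$ when $f_0$ factors through $D_m$), so that~(\ref{it:2}) holds.

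The conceptual content is slight, but the verifications in the previous paragraph are the real work, and they are where $K_5$-minor-freeness is indispensable. The key point is that any identification sending column $n$ to a vertex other than the one dictated by $\delta_{n,m}$ (respectively $\alpha_n,\beta_n,\gamma_n$) yields a $K_5$-minor in $H$: one takes as branch sets the images of the two boundary cycles coming from the $a$- and $b$-paths, the two apexes $v_1$ and $v_2$, and a connecting sub-path of the strip, the ``misalignment'' supplying the one crossing edge that is otherwise missing. The identifications that do not involve the seam are instead excluded outright by $K_4$-freeness, exactly as in the proof of Lemma~\ref{lem:g3}. Carrying this out is bounded but lengthy bookkeeping; I expect the main obstacle to be organising these sub-cases cleanly and exhibiting the relevant minors, rather than any single step.
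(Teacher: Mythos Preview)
Your overall architecture matches the paper's proof exactly: induct on $n$, restrict $f$ to the first $n-1$ columns, and case-split on the inductive outcome for $f_0$. Your treatment of the case ``$f_0$ injective'' is correct and lines up with the paper's three-way split on whether $f(a_n),f(b_n)$ land in $f(G_{n-1})$, yielding (\ref{it:1})--(\ref{it:5}) for $f$; likewise your claim that case~(\ref{it:2}) for $f_0$ propagates to case~(\ref{it:2}) for $f$ with the same $m$ is right.

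The gap is in your handling of cases (\ref{it:3}), (\ref{it:4}), (\ref{it:5}) for $f_0$. You assert that the extra column ``closes up the wrapping'' and that $f$ then factors through some $\delta_{n,m'}$. This is false. Take case~(\ref{it:3}): $f_0=\hat f_0\circ\alpha_{n-1}$ with $\hat f_0$ injective means $f_0(a_1)=f_0(a_{n-1})$ while $f_0(b_1)\neq f_0(b_{n-1})$. No $\delta_{n,m'}$ is compatible with this pattern of identifications: the only candidate period is $m'=n-2$, but $\delta_{n,n-2}$ forces $b_1$ and $b_{n-1}$ to coincide as well, contradicting injectivity of $\hat f_0$. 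The same obstruction (mutatis mutandis) rules out ``closing up'' from $B_{n-1}$ or $C_{n-1}$. What the paper actually proves is that these three cases for $f_0$ are \emph{impossible}: assuming $f_0$ factors through $A_{n-1}$ (say), one shows that every conceivable image for $f(a_n)$---whether outside $f_0(G_{n-1})$ or equal to some $f_0(a_i)$ or $f_0(b_i)$---produces a $K_5$-minor in $H$, a contradiction. So the correct statement is not that (\ref{it:3})--(\ref{it:5}) for $f_0$ collapse to (\ref{it:2}) for $f$, but that they never arise as restrictions of a homomorphism defined on all of $G_n$. Your sketch of the minor-building mechanism in the final paragraph is the right idea, but it needs to be deployed to derive a contradiction here, not a factorisation.
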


\begin{proof}
    We prove the claim by induction on $n$. The base case $n = 3$ follows by \Cref{lem:g3}. So, fix a $K_4$-free and $K_5$-minor-free graph $H$ and consider a homomorphism $f:G_{n+1} \to H$. Evidently, this restricts to a homomorphism $f': G_n \to H$. By the induction hypothesis, we may assume that $f'$ satisfies one of the five conditions of this proposition. 
    
    Assume that $f'$ satisfies (\ref{it:1}), i.e. $f$ is injective on $G_n= G_{n+1} \setminus \{a_{n+1},b_{n+1}\}$. We consider the images of the vertices $a_{n+1}$ and $b_{n+1}$ under $f$. Observe that $(a_{n+1},b_{n+1}) \in E(G_{n+1})$ so $f(a_{n+1})\neq f(b_{n+1})$. Clearly, if $f(a_{n+1})$ and $f(b_{n+1})$ are not any of the vertices in $f[G_n]$, then $f$ is itself injective and (\ref{it:1}) holds. We hence distinguish three cases. 

    First, suppose that ${f(a_{n+1}) \in f[G_n]}$ and ${f(b_{n+1}) \notin f[G_n]}$. Since there are edges $(v_1,a_{n+1}),(a_n,a_{n+1}),$ and $(b_n,a_{n+1})$, it follows that $f(a_{n+1}) \notin \{f(v_1),f(a_n),f(b_n)\}$. Moreover, $f(a_{n+1}) \neq f(v_2)$ as otherwise $\{f(b_{n-1}),f(b_n),f(a_n),f(v_2)\}$ would induce $K_4$ in $H$. Similarly, $f(a_{n+1})\neq f(a_{n-1})$, as otherwise $\{f(a_{n-1}),f(b_{n-1}),f(a_n),f(b_n)\}$ would induce $K_4$, and $f(a_{n+1})\neq f(b_{n-1})$, as otherwise $\{f(v_1),f(a_{n-1}),f(a_n),f(b_{n-1})\}$ would induce $K_4$. In addition, $f(a_{n+1}) \notin \{f(a_i): 2 \leq i \leq n-2\}$ as otherwise an edge $(f(a_i),f(a_n))$ for some $i \in [2,n-2]$ would produce a $K_5$-minor in $H$, namely the minor arising from $S_1=\{f(v_1)\},S_2=\{f(a_i)\},S_3=\{f(a_{n})\},S_4=\{f(a_j): i+1 \leq j \leq n-1\}, S_5=\{f(v_2),f(b_1),f(a_1),f(b_i),f(b_{i+1}),f(b_n)\}$. A similar argument reveals that $f(a_{n+1}) \notin \{f(b_i): 2 \leq i \leq n-2\}$. It follows that $f(a_{n+1})=f(b_1)$ or $f(a_{n+1})=f(a_1)$. The former case would produce a copy of $K_4$, namely $\{f(v_1),f(a_1),f(a_2),f(b_1)\}$, leading to a contradiction. Hence $f(a_{n+1}) = f(a_1)$. Since $f(b_{n+1}) \notin f[G_n]$, it follows that $f$ factors through the quotient homomorphism $\alpha_n$, i.e. case (\ref{it:3}) is true. 

    Next, suppose that $f(a_{n+1}) \in f[G_n]$ and $f(b_{n+1}) \in f[G_n]$. As before, we deduce from the first assumption that $f(a_{n+1})=f(a_1)$. This implies that there are edges $(f(a_1),f(a_n))$ and $ (f(a_1),f(b_n))$ in $H$. Since there are edges $(a_{n+1},b_{n+1}),(b_n,b_{n+1}),(v_2,b_{n+1})$ in $G_{n+1}$ we deduce that $f(b_{n+1})\notin \{f(a_1),f(b_n),f(v_2)\}$. Moreover, $f(b_{n+1})\neq f(v_1)$ as otherwise the edge $(f(v_1),f(v_2))$ would produce a $K_5$-minor in $H$, namely $S_1=\{f(v_1)\},S_2=\{f(a_1)\},S_3=\{f(a_n)\}, S_4=\{f(a_j):j \in [2,n-1]\}, S_5=\{f(v_2),f(b_1),f(b_2),f(b_n)\}$. Similarly, a $K_5$-minor arises in $H$ if $f(b_{n+1}) \in \{f(a_i),f(b_i): i \in [2,n-1]\}$. We thus deduce that $f(b_{n+1})=f(b_1)$, from which we conclude that $f[G_{n+1}]$ induces a copy of $D_n$ in $H$, and more precisely, case (\ref{it:2}) holds. 

    So, suppose that $f(a_{n+1}) \notin f[G_n]$ and $f(b_{n+1}) \in f[G_n]$. We consider the possible images of $b_{n+1}$ under $f$. Since there are edges $(a_{n+1},b_{n+1}),(v_2,b_{n+1}),(b_n,b_{n+1})$ in $G_{n+1}$ we deduce that $f(b_{n+1})\notin \{f(a_{n+1}),f(v_2),f(b_n)\}$. Moreover, $f(b_{n+1}) \neq f(v_1)$ as otherwise $\{f(v_1),f(a_n),f(a_{n+1}),f(b_n)\}$ would induce $K_4$ in $H$. Likewise, $f(b_{n+1})\neq f(a_n)$ as otherwise $\{f(v_2),f(a_{b-1}),f(b_n),f(a_n)\}$ would induce $K_4$ in $H$. Moreover $f(b_{n+1}) \notin \{f(a_i):i \in [2,n-1]\}$ as otherwise the edge $(f(a_i),f(a_{n+1}))$ would produce a $K_5$-minor in $H$, namely $S_1=\{f(v_1)\},S_2=\{f(a_i)\},S_3=\{f(a_{n+1})\}, S_4 = \{f(a_j): j \in [i+1,n]\},S_5=\{f(v_2),f(b_1),f(a_1),f(b_i),f(b_{i+1}),f(b_n)\}$. Very similarly, we deduce that $f(b_{n+1}) \notin \{f(b_i):i \in [2,n-1]\}$. It follows that $f(b_{n+1})=f(a_1)$ or $f(b_{n+1})=f(b_1)$. In the former case, it follows that $f$ factors through the quotient homomorphism $\beta_n$, and so (\ref{it:4}) is true, while in the latter case, it follows that $f$ factors through $\gamma_n$, and so (\ref{it:5}) is true. 

    Next, assume that $f'$ satisfies (\ref{it:2}), i.e. there is some $m \in [4,n-1]$ and and an embedding $\hat f:D_m \to H$ such that $f' = \hat f \circ \delta_{n,m}$. Arguing as before, it is easy to see that the assumptions on $H$ force $f(a_{n+1})$ to be equal to $\hat f(a_{n+1\bmod m} )$, and likewise $f(b_{n+1})=\hat f(b_{n+1\bmod m})$, implying that $f = \hat f \circ \delta_{n+1,m}$. Hence $f$ also satisfies (\ref{it:2}). 

    Finally, we argue that $f'$ cannot satisfy any of $(3),(4),$ and (\ref{it:5}). Indeed, assume for a contradiction that $f'$ satisfies (\ref{it:3}) and write $f'$ as $\hat f \circ \alpha_n$ for some injective homomorphism $\hat f:A_n \to H$. In particular, we know that $n \geq 5$. Consider the image of $a_{n+1}$ under $f$; this is some vertex adjacent to $f(a_n)=f(a_1), f(v_1),$ and $f(b_n)$. If $f(a_{n+1})\notin f[G_n]$, then we obtain a $K_5$-minor in $H$, namely $S_1=\{f(v_1)\}, S_2=\{f(a_1)\},S_3=\{f(a_{n-1})\}, S_4=\{f(a_i): i \in [2,n-1]\}, S_5=\{f(v_2),f(b_1),f(b_2),f(b_{n-1}),f(b_n),f(a_{n+1})\}$. So $f(a_{n+1})\in f[G_n]$. If $f(a_{n_1})=f(a_i)$ for some $i \in [2,n-2]$ then we obtain a $K_5$-minor in $H$ by picking some $j \in [2,n-2]\setminus\{i\}$ and letting $S_1=\{f(v_1),f(a_j)\},S_2=\{f(a_1)\},S_3=\{f(a_{n-1})\},S_4=\{f(b_i):i \in [n-1]\}, S_5=\{f(v_2),f(b_n),f(a_i)\}$. Likewise, if $f(a_{n+1})= f(a_{n-1})$ then we obtain the $K_5$-minor $S_1 = \{f(v_1)\},S_2=\{f(a_1)\},S_3=\{f(b_{n-1}),f(b_{n-2}),f(a_{n-2})\},S_4=\{f(a_i): i \in [2,n-3]\},S_5=\{f(v_2),f(b_n),f(a_{n-1})\}$. Consequently, $f(a_{n+1})=f(b_i)$ for some $i \in [1,n-1]$. This produces an edge $(f(v_1),f(b_i))$ and thus gives rise to the $K_5$-minor $S_1=\{f(v_1)\},S_2=\{f(a_j):j \in [1,i-1]\}, S_3=\{f(a_i)\},S_4=\{f(a_j):j \in [i+1,n-1]\},S_5=\{f(v_2),f(b_1),f(b_i),f(b_{n-1})\}$. It follows that $f'$ cannot satisfy (\ref{it:3}); via very similar reasoning, we exclude cases (\ref{it:4}) and (\ref{it:5}). 
\end{proof}

Having established the above, our characterisation of the $K_5$-minor-free homomorphic images of $D_n$ follows easily. 

\begin{restatable}{proposition}{plnrimage}\label{prop:planarimage}
    Fix $n \geq 4$. Then any $K_4$-free and $K_5$-minor-free homomorphic image of $D_n$ contains an induced copy of $D_m$ for some $m\geq 4$ such that $m \divides n$.    
\end{restatable}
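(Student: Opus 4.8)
The plan is to pass to the subgraph $G_n \subseteq D_n$ and feed the result into \Cref{prop:gninduction}. So let $g\colon D_n \twoheadrightarrow H$ be a surjective homomorphism onto a $K_4$-free and $K_5$-minor-free graph $H$, and let $f\colon G_n \to H$ be the restriction of $g$; since $G_n$ and $D_n$ have the same vertex set, $f$ and $g$ agree as vertex maps. By \Cref{prop:gninduction} one of the five alternatives (\ref{it:1})--(\ref{it:5}) holds for $f$. I would show that (\ref{it:3}), (\ref{it:4}), (\ref{it:5}) cannot occur, that (\ref{it:2}) directly yields an induced copy of $D_m$ with $m \divides n$, and that (\ref{it:1}) forces $g$ to be an embedding, whence $H \cong D_n$.

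Alternatives (\ref{it:3})--(\ref{it:5}) are ruled out at once. If $f$ factors through the quotient $\alpha_n$ as in (\ref{it:3}), then $g(a_1)=f(a_1)=f(a_n)=g(a_n)$; but $(a_1,a_n)\in E(D_n)$, so $g(a_1)g(a_n)\in E(H)$ would be a loop, which is impossible. The identical argument excludes (\ref{it:4}) using the edge $(a_1,b_n)$ of $D_n$ and the fact that $\beta_n$ identifies $a_1$ with $b_n$, and excludes (\ref{it:5}) using the edge $(b_1,b_n)$ and that $\gamma_n$ identifies $b_1$ with $b_n$.

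Suppose (\ref{it:2}) holds, with $m\in[4,n-1]$ and an embedding $\hat f\colon D_m\to H$ such that $f=\hat f\circ\delta_{n,m}$. Because $m\le n-1$, the map $\delta_{n,m}$ sends the edges $(a_m,a_{m+1})$, $(b_m,b_{m+1})$, $(a_{m+1},b_m)$ of $G_n$ onto the three ``wrap-around'' edges $(a_1,a_m)$, $(b_1,b_m)$, $(a_1,b_m)$ of $D_m$; as $\delta_{n,m}$ is also surjective on vertices, this gives $\delta_{n,m}[G_n]=D_m$ and hence $\hat f[D_m]=f[G_n]\subseteq g[D_n]=H$, which is an induced copy of $D_m$ in $H$ since $\hat f$ is an embedding. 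To see $m\divides n$, let $r\in[m]$ be the residue of $n$ modulo $m$, so that $\delta_{n,m}(a_n)=a_r$, $\delta_{n,m}(b_n)=b_r$, and $m\divides n$ iff $r=m$. Applying $g$ to the edges $(a_1,a_n)$, $(b_1,b_n)$, $(a_1,b_n)$ of $D_n$ and using that $\hat f$ is an embedding, we deduce $(a_1,a_r),(b_1,b_r),(a_1,b_r)\in E(D_m)$ (in particular $r\neq 1$, else $g(a_1)=g(a_n)$). Since $m\ge 4$, in $D_m$ the only $a$-neighbours of $a_1$ are $a_2,a_m$, the only $b$-neighbours of $b_1$ are $b_2,b_m$, and the only $b$-neighbours of $a_1$ are $b_1,b_m$; the first two memberships thus force $r\in\{2,m\}$ while the third forces $r\in\{1,m\}$, so $r=m$, i.e.\ $m\divides n$, and $m\ge 4$.

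The remaining alternative (\ref{it:1}) is the \emph{crux}. Here $f$, and hence $g$, is injective, and it suffices to show that $g$ is an embedding: then $H=g[D_n]\cong D_n$ contains an induced copy of $D_n$, and $n\divides n$. Suppose not; then $H$, and already its induced subgraph on the $2n+2\ge 10$ vertices $g[D_n]$, contains a copy of $D_n+e$ for some non-edge $e$ of $D_n$. For $n\ge 4$ the triangulation $D_n$ has no separating triangle --- its only triangles are the $2n$ ``annular'' faces $\{a_i,a_{i+1},b_i\}$, $\{a_{i+1},b_i,b_{i+1}\}$ and the $2n$ faces $\{v_1,a_i,a_{i+1}\}$, $\{v_2,b_i,b_{i+1}\}$ through the apices --- so it is $4$-connected; consequently $D_n+e$ is $4$-connected and, being a proper supergraph of a maximal planar graph, non-planar. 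By Wagner's structure theorem, every $4$-connected graph with no $K_5$ minor is planar or is the Wagner graph $V_8$ on $8$ vertices; since $D_n+e$ is $4$-connected, non-planar, and has more than $8$ vertices, it has a $K_5$ minor, and hence so does $H$ --- a contradiction. (Alternatively, one avoids Wagner by exhibiting, for each of the finitely many types of non-edge $e$ of $D_n$, an explicit $K_4$ subgraph or $K_5$ minor of $D_n+e$, much as in the proof of \Cref{prop:gninduction}.) This last case is where I expect the real work to lie --- establishing $4$-connectivity of $D_n$, or, on the hands-on route, the bookkeeping of the minors; alternatives (\ref{it:2})--(\ref{it:5}) are then routine.
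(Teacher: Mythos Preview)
Your proof is correct and follows the paper's approach exactly: restrict to $G_n$, invoke \Cref{prop:gninduction}, and eliminate cases (\ref{it:3})--(\ref{it:5}) via the three extra edges of $D_n$, handle (\ref{it:2}) by reading off the induced $D_m$, and in (\ref{it:1}) argue that injectivity upgrades to an embedding because $D_n+e$ is not $K_5$-minor-free. The only point where you diverge is that the paper simply \emph{asserts} ``the addition of any edge in $D_n$ creates a $K_5$-minor'' and that $m\divides n$ in case (\ref{it:2}), whereas you supply proofs --- for the former via $4$-connectivity of $D_n$ and Wagner's structure theorem, and for the latter by tracking the images of the wrap-around edges through the embedding $\hat f$; this is additional detail rather than a different strategy.
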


\begin{proof}
    Consider a homomorphism $f:D_n \to H$ where $H$ is $K_4$-free and $K_5$-minor-free. Then $f$ descends to a homomorphism $f': G_n \to H$. It follows that one of the five cases of \Cref{prop:gninduction} holds. If $f'$ is injective, then in particular $f$ is injective; since the addition of any edge in $D_n$ creates a $K_5$-minor, it follows that $f$ is in fact an embedding as required. Suppose that case (\ref{it:2}) is true, and let $m \in [4,n-1]$ and $\hat f:D_m \to H$ be such that $f'=\hat f \circ \delta_{n,m}$. In particular, $D_m$ is an induced subgraph of $H$ and $m \divides n$. Finally, case (\ref{it:3}) leads to a contradiction as the edge $(a_1,a_n)$ in $D_n$ implies that $f(a_1)\neq f(a_n)$, case (\ref{it:4}) leads to a contradiction as the edge $(a_1,b_n)$ implies that $f(a_1)\neq f(b_n)$, and likewise, case (\ref{it:5}) leads to a contradiction as the edge $(b_1,b_n)$ implies that $f(b_1)\neq f(b_n)$. 
\end{proof}

We also define $G_\infty$ as the countably infinite analogue of $G_n$, i.e. the graph on the vertex set $V(G_\infty) = \{v_1,v_2\}\cup \{a_i: i \in \N_{>0}\}\cup \{b_i: i \in \N_{>0}\}$ and edge set 
\[ E(G_n)=\{(v_1,a_i):i \in \N_{>0}\} \cup \{(v_2,b_i): i \in \N_{>0}\} \cup \{(a_i,b_i): i \in \N_{>0}\}\]\[\cup \{(a_i,a_{i+1}): i \in \N_{>0} \} \cup \{ (b_i,b_{i+1}): i \in \N_{>0}\} \cup \{(a_{i+1},b_i): i \in \N_{>0}\}.\]
Likewise, we define the homomorphism $\delta_{\infty,m}:G_\infty \to D_m$ in analogy to the homomorphisms $\delta_{n,m}:G_n \to D_m$. \Cref{prop:gninduction} allows us to also characterise the finite $K_5$-minor-free homomorphic images of $G_\infty$. 

\begin{restatable}{lemma}{corinfty}\label{cor:infty}
    Let $f:G_\infty \to H$ be a homomorphism where $H$ is finite, $K_4$-free, and $K_5$-minor-free. Then there is some $m \geq 4$ and an embedding $\hat f: D_m \to H$ such that $f = \hat f \circ \delta_{\infty,m}$. 
\end{restatable}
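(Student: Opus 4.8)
The plan is to exhaust $G_\infty$ by its finite induced subgraphs and push the case analysis of \Cref{prop:gninduction} along the chain. For $n\geq 3$ write $f_n:=f\restr G_n$; this is a homomorphism $G_n\to H$, since the induced subgraph of $G_\infty$ on $\{v_1,v_2,a_1,\dots,a_n,b_1,\dots,b_n\}$ is exactly $G_n$, and moreover $\delta_{\infty,m}\restr G_n=\delta_{n,m}$ for every $m$. Since $f_n$ is the restriction of $f_{n+1}:G_{n+1}\to H$ to $G_n$, the $K_4$- and $K_5$-minor arguments used in the last paragraph of the proof of \Cref{prop:gninduction} to rule out conditions (3), (4), (5) apply directly here (they only use that $f_n$ extends to a homomorphism out of $G_{n+1}$ and that $H$ is $K_4$-free and $K_5$-minor-free). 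Combined with \Cref{prop:gninduction} applied to $f_n$ itself, this gives: for every $n\geq 3$, either $f_n$ is injective, or $f_n$ satisfies condition (2), i.e.\ there are $m\in[4,n-1]$ and an embedding $\hat f_n:D_m\to H$ with $f_n=\hat f_n\circ\delta_{n,m}$.

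Next I would use finiteness of $H$ to leave the injective regime. Since $|V(G_n)|=2n+2$ grows without bound, pick $n_0\geq 3$ with $2n_0+2>|V(H)|$; then $f_{n_0}$ is not injective and hence satisfies condition (2): there are $m\geq 4$ and an embedding $\hat f:D_m\to H$ with $f_{n_0}=\hat f\circ\delta_{n_0,m}$. I then claim $f_n=\hat f\circ\delta_{n,m}$ for all $n\geq n_0$, by induction on $n$. The base case is the previous sentence. For the step, if $f_n=\hat f\circ\delta_{n,m}$ then $f_{n+1}\restr G_n$ satisfies condition (2) of \Cref{prop:gninduction} with this specific $D_m$ and $\hat f$, so the paragraph of that proof beginning ``assume that $f'$ satisfies (2)'' shows $f_{n+1}=\hat f\circ\delta_{n+1,m}$; the factorizing embedding is unchanged because $\delta_{n+1,m}$ is surjective onto $D_m$ and hence $\hat f$ is determined by $f_{n+1}$.

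Finally, since $G_\infty=\bigcup_{n\geq n_0}G_n$ and $f(w)=f_n(w)=\hat f(\delta_{n,m}(w))=\hat f(\delta_{\infty,m}(w))$ for every $w\in G_n$ with $n\geq n_0$, we conclude $f=\hat f\circ\delta_{\infty,m}$ with $\hat f:D_m\to H$ an embedding and $m\geq 4$, as required. I expect the only genuinely delicate point to be the bookkeeping in the inductive step above: checking that the embedding $\hat f$ witnessing condition (2) really does not drift as $n$ increases (which reduces to surjectivity of the $\delta_{n,m}$), and confirming by inspection that the relevant sub-arguments inside the proof of \Cref{prop:gninduction} were phrased for an arbitrary homomorphism out of $G_{n+1}$ and do not secretly rely on anything more. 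Once that is verified, no new combinatorial work is needed — in particular, no fresh $K_5$-minors have to be exhibited.
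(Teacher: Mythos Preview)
Your argument is correct and follows essentially the same route as the paper: restrict $f$ to $G_n$, invoke \Cref{prop:gninduction}, eliminate all cases but (\ref{it:2}), and then propagate the factorisation through $D_m$ to all of $G_\infty$ via the ``case~(\ref{it:2}) $\Rightarrow$ case~(\ref{it:2})'' step inside that proof. The paper's version is terser on two points: it takes $n=|H|$ and eliminates cases (\ref{it:1}), (\ref{it:3}), (\ref{it:4}), (\ref{it:5}) all at once by the size count $|A_n|=|B_n|=|C_n|=2n+1>n$ (rather than invoking the extension argument for (\ref{it:3})--(\ref{it:5}) as you do), and it suppresses the explicit induction you spell out for extending $\hat f\circ\delta_{n_0,m}$ to $\hat f\circ\delta_{\infty,m}$ --- but your more detailed treatment of that last step is a genuine improvement in exposition, and your worry about $\hat f$ drifting is correctly dispatched by surjectivity of $\delta_{n,m}$.
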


\begin{proof}
    Fix $f$ as above and let $n:=|H|$. It follows that $f$ restricts to a homomorphism $f':G_n \to H$; this satisfies one of the five cases of \Cref{prop:gninduction}. Clearly, cases (\ref{it:1}),(\ref{it:3}),(\ref{it:4}), and (\ref{it:5}) are ruled out due to size restrictions. Consequently, case (\ref{it:2}) holds and the claim follows. 
 \end{proof}

We proceed to show that the existence of the graphs $D_n$ as induced subgraphs is definable among $K_4$-free $K_5$-minor-free graphs by a simple first-order formula. Indeed, consider the formulas
\[ \chi(x_1,x_2,y_1,z_1,y_2,z_2) =  E(x_1,y_2) \land E(y_1,y_2) \land E(z_1,y_2) \land E(z_1,z_2) \land E(y_2,z_2) \land  E(z_2,x_2),\text{ and}\]
\[ \phi = \exists x_1,x_2,y,z [ E(x_1,y)\land E(y,z)\land E(z,x_2)  \land \forall a,b (E(x_1,a)\land E(a,b)\land E(b,x_2))\]\[ \to \exists c,d \ \chi(x_1,x_2,a,b,c,d))]\]

\begin{lemma}\label{prop:planarinduced}
Let $H$ be a finite $K_4$-free and $K_5$-minor free graph. If $H \models \phi$ then there is some $n \geq 4$ such that $H$ contains $D_n$ as an induced subgraph. 
\end{lemma}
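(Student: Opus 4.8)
The plan is to read off from $\phi$ that it forces a homomorphic image of $G_\infty$ inside $H$, and then invoke \Cref{cor:infty}. So suppose $H \models \phi$ and fix witnesses $x_1, x_2$ together with $y, z$ for which $E(x_1, y)$, $E(y, z)$, $E(z, x_2)$ hold; call an ordered pair $(a,b)$ of vertices of $H$ a \emph{rung} if $E(x_1,a) \wedge E(a,b) \wedge E(b,x_2)$, so that $(y,z)$ is a rung. The crucial observation is that the universal clause of $\phi$, applied to a rung $(a,b)$, returns a pair $(c,d)$ with $E(x_1,c)$, $E(a,c)$, $E(b,c)$, $E(b,d)$, $E(c,d)$, $E(d,x_2)$; in particular $E(x_1,c) \wedge E(c,d) \wedge E(d,x_2)$, so $(c,d)$ is itself a rung and may be fed back into the universal clause.

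Iterating this, I would set $(y_0, z_0) := (y, z)$ and, given a rung $(y_i, z_i)$, let $(y_{i+1}, z_{i+1})$ be the pair $(c,d)$ obtained from the universal clause with input $(y_i, z_i)$. This produces an infinite sequence $x_1, x_2, y_0, z_0, y_1, z_1, \dots$ of (not necessarily distinct) vertices of $H$ in which, for every $i \geq 0$, the edges $E(x_1, y_i)$, $E(x_2, z_i)$, $E(y_i, z_i)$, $E(y_i, y_{i+1})$, $E(z_i, z_{i+1})$, and $E(z_i, y_{i+1})$ all hold. Mapping $v_1 \mapsto x_1$, $v_2 \mapsto x_2$, $a_i \mapsto y_{i-1}$ and $b_i \mapsto z_{i-1}$ for $i \geq 1$ then defines a homomorphism $f : G_\infty \to H$, since each of the six edge families of $G_\infty$ — namely $(v_1, a_i)$, $(v_2, b_i)$, $(a_i, b_i)$, $(a_i, a_{i+1})$, $(b_i, b_{i+1})$, $(a_{i+1}, b_i)$ — maps to one of these six edges.

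Since $H$ is finite, $K_4$-free and $K_5$-minor-free, \Cref{cor:infty} now applies to $f$ and yields some $m \geq 4$ together with an embedding $\hat f : D_m \to H$; hence $H$ contains $D_m$ as an induced subgraph, completing the proof. I do not anticipate a genuine obstacle here: the only points requiring care are verifying that the pair $\chi$ returns is again a rung (so the iteration never stalls) and matching the six edge families of $G_\infty$ against the six adjacencies produced by the iteration, both of which are immediate from the definitions of $\chi$ and $G_\infty$. (One may note in passing that, $H$ being finite, the sequence of rungs is necessarily eventually periodic, which is exactly the phenomenon that makes the image of $f$ collapse onto a finite $D_m$.)
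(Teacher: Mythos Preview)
Your proposal is correct and follows essentially the same approach as the paper: both build a homomorphism $G_\infty \to H$ by starting from the existential witnesses of $\phi$ and repeatedly feeding the current pair into the universal clause via $\chi$, then invoke \Cref{cor:infty}. The only difference is cosmetic---the paper phrases the construction as an increasing chain of partial homomorphisms $f_n : G_n \to H$ rather than as a sequence of rungs---but the content is identical.
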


\begin{proof}
    Fix a graph $H$ as above, and suppose that $G\models \phi$. We inductively define a chain of partial homomorphisms $f_1 \subseteq f_2 \subseteq f_3 \subseteq \dots$ from $G_\infty \to H$ such that $\mathrm{dom}(f_n)=G_n$. Then the map $f = \cup_{n = 1}^\infty f_n$ is a homomorphism $G_\infty \to H$, and hence \Cref{cor:infty} implies that $H$ contains some $D_n$ as an induced subgraph. 
    
    Since $H \models \phi$ it follows that there are $x_1,x_2,y,z \in V(H)$ such that 
    \[H \models E(x_1,y)\land E(y,z)\land E(z,x_2).\]
    Consequently, the map $f_1:G_1 \to H$ given by $f(v_1) = x_1, f(v_2)= x_2, f(a_1) = y, f(b_1)=z$ is a homomorphism as required. So, suppose that $f_n$ has been defined. Since 
    \[ H \models E(x_1,f(a_n))\land E(f(a_n),f(b_n)) \land E(f(b_n),x_2) \]
    it follows that 
    \[ H \models \exists c,d \ \chi(x_1,x_2,f(a_n),f(b_n),c,d).\]
    We consequently extend $f_n:G_n \to H$ to $f_{n+1}:G_{n+1} \to H$ by letting $f_{n+1}(a_{n+1})=c$ and $f_{n+1}(b_{n+1})=d$; this is easily seen to be a valid homomorphism by the choice of $\chi$. 
\end{proof}

\begin{lemma}\label{prop:planarmodels}
    Let $H$ be a $K_5$-minor-free graph. If $H$ contains some $D_n$ for $n \geq 3$ as an induced subgraph then $H \models \phi$. 
\end{lemma}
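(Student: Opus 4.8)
The plan is to extract the witnesses $x_1,x_2,y,z$ for the leading existential quantifiers directly from an embedded copy of $D_n$, and then to use $K_5$-minor-freeness of $H$ to force every length-$3$ path from $x_1$ to $x_2$ to lie inside that copy, where the remaining witnesses $c,d$ are visible. Fix an induced copy of $D_n$ in $H$ and, abusing notation, write $v_1,v_2,a_1,\dots,a_n,b_1,\dots,b_n$ for its vertices, with indices read cyclically so that $a_ia_{i+1}$, $b_ib_{i+1}$, $a_ib_i$ and $a_{i+1}b_i$ are all edges of $D_n$ (equivalently $E(D_n)$ contains $a_ib_i$ and $a_ib_{i-1}$ for every $i$). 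I set $x_1:=v_1$, $x_2:=v_2$, $y:=a_1$, $z:=b_1$; since $D_n$ is a subgraph of $H$, the conjunct $E(x_1,y)\land E(y,z)\land E(z,x_2)$ holds. It remains to show: whenever $a,b\in V(H)$ satisfy $E(v_1,a)\land E(a,b)\land E(b,v_2)$, there exist $c,d$ with $\chi(v_1,v_2,a,b,c,d)$.

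First I would record the preliminary fact that $v_1$ and $v_2$ are non-adjacent in $H$: otherwise the five connected sets $\{v_1\},\{v_2\},\{a_1,b_1\},\{a_2,b_2\},\{a_k,b_k:3\le k\le n\}$ partition $V(D_n)$ and are pairwise adjacent (the first two via the edge $v_1v_2$; each of the last three meets $\{v_1\}$ through an $a$-vertex and $\{v_2\}$ through a $b$-vertex, and they meet each other along the $a$-cycle, e.g. via $a_1a_2$, $a_2a_3$, $a_na_1$), so they are the branch sets of a $K_5$-minor in $H$, contradicting the hypothesis. Hence $a\notin\{v_1,v_2\}$ and $b\notin\{v_1,v_2\}$; since $D_n$ is induced and the only $D_n$-neighbours of $v_1$ (resp.\ $v_2$) are the $a_i$'s (resp.\ the $b_i$'s), it follows that if $a\in V(D_n)$ then $a=a_i$ for some $i$, and if $b\in V(D_n)$ then $b=b_j$ for some $j$.

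Now I split into cases. \emph{Both $a,b\in V(D_n)$:} then $a=a_i$, $b=b_j$ with $a_i,b_j$ adjacent in $D_n$, which forces $j\equiv i$ or $j\equiv i-1\pmod n$; in the first case take $c:=a_{i+1}$, $d:=b_{i+1}$, and in the second $c:=a_{i-1}$, $d:=b_{i-2}$. A direct check shows that in each case all six edges demanded by $\chi(v_1,v_2,a,b,c,d)$ are edges of $D_n$ (using $n\ge 3$), so $\chi$ holds in $H$. \emph{At least one of $a,b$ lies outside $V(D_n)$:} I claim $H$ has a $K_5$-minor, so (by hypothesis) this situation does not occur. If $a=a_i\in V(D_n)$ but $b\notin V(D_n)$, choose an index $j$ so that the cyclic arcs $i{+}1,\dots,j$ and $j{+}1,\dots,i{-}1$ are both nonempty (possible since $n\ge3$), and take the branch sets $\{v_1,a_i\}$, $\{v_2,b\}$, $\{b_i\}$, $\{a_k,b_k:k\in[i{+}1,j]\}$, $\{a_k,b_k:k\in[j{+}1,i{-}1]\}$; one checks each is connected (the last two run along contiguous arcs of the $a$-cycle) and all ten pairs are adjacent, the one adjacency that $D_n$ alone cannot supply, namely $\{v_1,a_i\}\sim\{v_2,b\}$, being realised by the edge $a_ib$. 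The symmetric case $a\notin V(D_n)$, $b=b_j\in V(D_n)$ reduces to the previous one by relabelling the copy of $D_n$ via its automorphism $v_1\leftrightarrow v_2$, $a_i\mapsto b_{-i}$, $b_i\mapsto a_{-i}$. Finally, if $a,b\notin V(D_n)$, the branch sets $\{v_1,a,b\}$, $\{v_2\}$, $\{a_1,b_1\}$, $\{a_2,b_2\}$, $\{a_k,b_k:3\le k\le n\}$ give a $K_5$-minor, with $\{v_1,a,b\}$ connected via the path $v_1{-}a{-}b$ and the edge $bv_2$ providing its adjacency to $\{v_2\}$. In every surviving case we have produced $c,d$ with $\chi(v_1,v_2,a,b,c,d)$, so $H\models\phi$.

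The main obstacle is the $K_5$-minor bookkeeping in the ``outside'' cases: the branch sets must be chosen so that connectedness and all ten pairwise adjacencies hold simultaneously and uniformly in $n$, with $n=3$ (where the $a$- and $b$-cycles collapse to triangles and $v_1,v_2$ have degree $3$) being the tightest instance. I would keep this under control by always routing any ``large'' branch set along a single contiguous arc of the $a$-cycle, so that it is automatically connected and automatically adjacent to the hub sets $\{v_1,\dots\}$ and $\{v_2,\dots\}$ through edges of the form $v_1a_k$ and $v_2b_k$, and by reserving the unique genuinely new edge on the path $v_1{-}a{-}b{-}v_2$ for the single adjacency that the copy of $D_n$ by itself cannot realise.
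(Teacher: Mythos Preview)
Your proof is correct and follows essentially the same strategy as the paper: pick $x_1=v_1$, $x_2=v_2$, force any $a,b$ on a $v_1$--$v_2$ path of length three to lie among the $a_i$'s and $b_j$'s respectively via a $K_5$-minor argument, and then read off the witnesses $c,d$ inside $D_n$. Two minor points of comparison: first, your preliminary $K_5$-minor argument for the non-adjacency of $v_1$ and $v_2$ is unnecessary, since $D_n$ is \emph{induced} in $H$ and $v_1v_2\notin E(D_n)$ already gives this (and likewise immediately forces $a\in\{a_i\}$ and $b\in\{b_j\}$ whenever $a,b\in V(D_n)$); second, the paper handles the ``$a\notin\{a_i\}$'' situation with a single uniform choice of branch sets $\{v_1\},\{a_1\},\{a_n\},\{a_2,\dots,a_{n-1}\},\{v_2,b_1,b_n,a,b\}$, which avoids your three-way case split on which of $a,b$ lies outside $V(D_n)$. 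These are purely stylistic economies; your branch-set constructions all check out.
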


\begin{proof}
    Let $D_n \leq H$ be as above. Clearly, 
    \[ H \models E(v_1,a_1)\land E(a_1,b_1) \land E(b_1,v_2).\]
    So, let $a,b \in V(H)$ be arbitrary vertices such that
    \[ H \models E(v_1,a)\land E(a,b) \land E(b,v_2);\]
    we first argue that $a \in \{a_i :i \in [n]\}$ and $b \in \{b_i : i \in [n]\}$. Towards this, observe first that $b \notin \{a_i :i \in [n]\}\cup\{v_1\}$ as otherwise the edge $(v_2,a_i)$ or $(v_2,v_1)$ would contradict that $D_n$ is induced in $H$. So, assume for a contradiction that $a \notin \{a_i :i \in [n]\}$. Since there is an edge $(v_1,a)$ it follows that $a \neq v_1$, and so in particular the sets $S_1=\{v_1\},S_2=\{a_1\},S_3=\{a_n\},S_4=\{a_j : j \in [2,n-1]\},S_5=\{v_2,b_1,b_n,a,b\}$ produce a $K_5$-minor in $H$. With a symmetric argument we obtain that $b \in \{b_i : i \in [n]\}$. 
    Now, since there is an edge $(a,b)$ it follows that there is some $i \in [n]$ such that $a = a_i$ and $b= b_i$ or $b= b_{i - 1 \bmod n}$. In the former case we have that 
    \[H \models \chi(v_1,v_2,a,b,a_{i+1 \bmod n},b_{i+1  \bmod n}), \]
    while in the latter we have that 
    \[ H \models \chi(v_1,v_2,a,b,a_{i-1 \bmod n},b_{i-2 \bmod n}). \]
    In either case, 
    \[ H \models \exists c,d \ \chi(v_1,v_2,a,b,c,d),\]
    and since the choice of $a,b \in V(H)$ was arbitrary we obtain that $H \models \phi$ as required. 
\end{proof}

Putting all the above together, we deduce the main theorem of this section. 

\begin{theorem}
The class of planar graphs does not have the homomorphism preservation property. 
\end{theorem}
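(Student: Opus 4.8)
The plan is to invoke \Cref{lem:minimalmodels}: I will exhibit a sentence that is preserved by homomorphisms over the class of planar graphs but has infinitely many minimal induced models in it, so that it cannot be equivalent over planar graphs to an existential-positive sentence, witnessing the failure of homomorphism preservation. The obvious candidate is the sentence $\phi$ constructed above, but $\phi$ on its own is \emph{not} preserved by homomorphisms over all planar graphs: indeed $D_4\models\phi$, there is a homomorphism $D_4\to K_4$ (a proper $4$-colouring), and the planar graph obtained from $K_4$ by attaching a pendant vertex to each of its four vertices contains a $K_4$ yet fails $\phi$ --- for every choice of a witness $x_1$, either $x_1$ is a pendant and the clause $\chi$ cannot be satisfied, or $x_1$ lies in the $K_4$ and its pendant neighbour obstructs the universal part of $\phi$. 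The remedy is to work with $\psi:=\phi\lor\theta$, where $\theta:=\exists x_1x_2x_3x_4\bigwedge_{1\le i<j\le 4}E(x_i,x_j)$ asserts the presence of a $K_4$. The point is that $\theta$ is preserved by homomorphisms over \emph{all} graphs: any two vertices of a $K_4$ are adjacent and graphs carry no loops, so a homomorphism is injective on any $K_4$ and maps it onto a $K_4$.

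The first step is to verify that $\psi$ is preserved by homomorphisms over planar graphs. Let $G,H$ be planar with $G\models\psi$, and let $f\colon G\to H$ be a homomorphism. If $G$ contains a $K_4$ then so does $H$ by the remark above, and if $H$ contains a $K_4$ this is immediate; in either case $H\models\theta$, hence $H\models\psi$. Otherwise $G$ and $H$ are both $K_4$-free, and since $G\models\psi$ but $G\not\models\theta$ we have $G\models\phi$; moreover $G$ and $H$, being planar, are $K_5$-minor-free. By \Cref{prop:planarinduced}, $G$ contains an induced copy of $D_n$ for some $n\ge 4$, so $f$ restricts to a homomorphism $D_n\to H$; since $H$ is $K_4$-free and $K_5$-minor-free, the argument proving \Cref{prop:planarimage} shows that $H$ itself contains an induced copy of $D_m$ for some $m\ge 4$ (with $m\mid n$). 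Then $H\models\phi$ by \Cref{prop:planarmodels}, so $H\models\psi$.

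The second step is to check that every $D_n$ with $n\ge 4$ is a minimal induced model of $\psi$ in the class of planar graphs, giving infinitely many of them. Such a $D_n$ is planar, is $K_4$-free, and contains itself as an induced copy of $D_n$, so $D_n\models\phi$ by \Cref{prop:planarmodels} and hence $D_n\models\psi$. For minimality, let $S\subsetneq V(D_n)$ and consider $D_n[S]$: it is $K_4$-free, so $D_n[S]\not\models\theta$, and it is planar, hence $K_4$-free and $K_5$-minor-free, so by \Cref{prop:planarinduced} it can satisfy $\phi$ only if it contains an induced copy of $D_m$ for some $m\ge 4$; as $|S|<|V(D_n)|$ this would yield an induced copy of $D_m$ in $D_n$ with $4\le m<n$. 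A short structural inspection of $D_n$ excludes this: the two vertices playing the hubs of such an embedded $D_m$ must have degree at least $m$ in $D_n$, and since every non-hub vertex of $D_n$ has degree exactly $5$, for $m\ge 6$ these hubs are forced to be $v_1$ and $v_2$, so the $a$-vertices of the embedded $D_m$ would have to induce a cycle $C_m$ inside the induced $n$-cycle on $\{a_1,\dots,a_n\}$ of $D_n$, which is impossible for $m<n$; the cases $m\in\{4,5\}$ are handled likewise using the $5$-regularity of $D_5$ and the fact that $D_n$ has no induced $4$-cycle. Hence no proper induced subgraph of $D_n$ models $\psi$, so $D_n$ is a minimal induced model.

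Finally, the class of planar graphs is hereditary and consists of finite structures, so \Cref{lem:minimalmodels} applies: $\psi$ is preserved by homomorphisms over this class but has infinitely many minimal induced models in it, hence $\psi$ is not equivalent over planar graphs to any existential-positive sentence, and the class of planar graphs lacks the homomorphism preservation property. (The same argument, reading ``$K_5$-minor-free'' for ``planar'' throughout, shows the failure on $K_5$-minor-free graphs as well.) I expect the delicate point to be the preservation step --- in particular realising that bare $\phi$ is not preserved and that adjoining $\theta$, together with the injectivity of homomorphisms on a $K_4$, exactly repairs this --- whereas the combinatorial claim that $D_n$ contains no induced copy of a smaller $D_m$, while genuinely needed for the minimality of the models, is routine.
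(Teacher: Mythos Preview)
Your proof is correct and follows the same strategy as the paper: disjoin $\phi$ with the $K_4$-sentence, verify preservation over planar graphs via \Cref{prop:planarinduced}, \Cref{prop:planarimage}, and \Cref{prop:planarmodels}, exhibit the $D_n$ ($n\ge 4$) as infinitely many minimal induced models, and conclude by \Cref{lem:minimalmodels}. The only difference is that the paper simply asserts the list of minimal models whereas you sketch the argument that no $D_m$ with $4\le m<n$ is induced in $D_n$; your treatment of $m\in\{4,5\}$ is terse but the claim is true and the details are indeed routine.
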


\begin{proof}
    Let $\hat{\phi}$ be the disjunction of $\phi$ with the formula that induces a copy of $K_4$, i.e. 
    \[\hat{\phi} := \phi \lor \exists x_1, x_2, x_3, x_4 \bigwedge_{i\neq j} E(x_i,x_j).\]
     We argue that $\hat{\phi}$ is preserved by homomorphisms over the class of planar graphs. Indeed, let $f:G \to H$ be a homomorphism with $G,H$ planar such that $G \models \hat\phi$. Clearly, if $H$ contains a copy of $K_4$ then $H \models \hat\phi$. So, without loss of generality we may assume that $G \models \phi$ and $G,H$ are $K_4$-free. It follows by \Cref{prop:planarinduced} that there exists some $n\geq 4$ such that $G$ contains $D_n$ as a subgraph. Consequently, \Cref{prop:planarimage} implies that there is some $m \geq 4$ such that $H$ contains $D_m$ as a subgraph. \Cref{prop:planarmodels} then implies that $H \models \phi$, and thus $H \models \hat\phi$ as required. To conclude, observe that the minimal models of $\hat\phi$ over the class of planar graphs are $K_4$ and the graphs $D_n$ for $n \geq 4$; since these are infinitely many \Cref{lem:minimalmodels} implies that $\hat\phi$ is not equivalent to an existential-positive formula over the class of planar graphs. 
\end{proof}

Since in the above argument we only use exclusion of $K_5$-minors, the same proof relativises to the following theorem. 

\begin{theorem}
The class of all $K_5$-minor-free graphs does not have the homomorphism preservation property. 
\end{theorem}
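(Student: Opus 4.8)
The plan is to observe that the proof of the previous theorem used planarity \emph{only} through the exclusion of $K_5$ as a minor, and hence transfers essentially verbatim to the class $\C$ of all $K_5$-minor-free graphs. First I would record two structural facts: $\C$ is hereditary, being closed under deletion of vertices (which cannot create a minor), so that \Cref{lem:minimalmodels} is applicable; and each $D_n$ lies in $\C$, since $D_n$ is planar and therefore $K_5$-minor-free. This already places the intended witnesses inside the class.

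Next I would take the same sentence $\hat\phi := \phi \lor \exists x_1 x_2 x_3 x_4 \bigwedge_{i \neq j} E(x_i,x_j)$ and check that it is preserved by homomorphisms over $\C$, reusing the earlier chain of lemmas. Given a homomorphism $f : G \to H$ with $G,H \in \C$ and $G \models \hat\phi$: if $H$ contains a copy of $K_4$ then $H \models \hat\phi$ and we are done, so we may assume $G$ and $H$ are $K_4$-free and $G \models \phi$. Then \Cref{prop:planarinduced}, whose hypotheses are precisely ``finite, $K_4$-free, $K_5$-minor-free'', yields an induced copy of some $D_n$ with $n \geq 4$ in $G$; restricting $f$ to this copy and applying \Cref{prop:planarimage}, which again assumes only $K_4$-freeness and $K_5$-minor-freeness, produces an induced copy of some $D_m$ with $m \geq 4$ in $H$; and \Cref{prop:planarmodels}, which assumes only $K_5$-minor-freeness, gives $H \models \phi$, hence $H \models \hat\phi$.

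Finally I would identify the minimal induced models of $\hat\phi$ in $\C$ as exactly $K_4$ together with the graphs $D_n$, $n \geq 4$, just as in the planar case. The graph $K_4$ has no proper induced subgraph in which $\hat\phi$ holds; and a proper induced subgraph $N$ of $D_n$ (for $n \geq 4$) is $K_4$-free, so if $N \models \phi$ then by \Cref{prop:planarinduced} it contains some induced $D_k$ with $k \geq 4$, which is impossible since $|D_k| = 2k+2$ is strictly increasing and, more concretely, destroying any of the ``wrap-around'' edges $(a_1,a_n),(b_1,b_n),(a_1,b_n)$ by vertex deletion does not leave a smaller $D_k$. Since there are infinitely many such minimal induced models, \Cref{lem:minimalmodels} shows $\hat\phi$ is not equivalent over $\C$ to any existential-positive sentence, so $\C$ fails the homomorphism preservation property.

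The only point genuinely requiring care — and the main (if modest) obstacle — is verifying that no step in the chain \Cref{lem:g3} $\to$ \Cref{prop:gninduction} $\to$ \Cref{prop:planarimage}, \Cref{prop:planarinduced}, \Cref{prop:planarmodels} ever invokes the exclusion of $K_{3,3}$: inspecting those proofs, every forbidden configuration that is used is either an induced $K_4$ subgraph or an explicitly exhibited $K_5$-minor, so the argument indeed never needs planarity beyond $K_5$-minor-freeness, and the transfer goes through.
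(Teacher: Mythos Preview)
Your proposal is correct and follows exactly the paper's approach: the paper's own proof of this theorem is the single sentence ``Since in the above argument we only use exclusion of $K_5$-minors, the same proof relativises to the following theorem,'' and you have spelled out precisely what that relativisation amounts to. Your verification that each of \Cref{lem:g3}, \Cref{prop:gninduction}, \Cref{prop:planarimage}, \Cref{prop:planarinduced}, and \Cref{prop:planarmodels} uses only $K_4$-freeness and $K_5$-minor-freeness (never $K_{3,3}$-exclusion) is the entire content of the argument, and your treatment of minimal models is, if anything, more careful than the paper's.
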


Finally, while we have not referred to topological minors to simplify our arguments, an easy check reveals that the above are still valid when considering graphs that forbid $K_5$ as a topological minor, implying that homomorphism preservation also fails on the class of $K_5$-topological-minor-free graphs.

\section{Conclusion}

Much work in finite model theory explores \emph{tame} classes of finite structures.  In~\cite{dawar2007finite}, two related notions of tameness are identified: algorithmic tameness and model-theoretic tameness.  The former is centred around the tractability of model-checking for first-order logic while the latter is illustrated by preservation theorems and it was argued that these occurred together in sparse classes of structures.  More recently, algorithmic tameness has been explored extensively for dense classes as well (see~\cite{dreier2023firstorder} for example).  On the other hand, the results here show that the status of preservation theorems on sparse classes is more subtle and relies on closure properties that are not always present in natural classes such as planar graphs.  Nonetheless, it is an interesting question whether the recent understanding of tame dense classes, such as \emph{monadically stable} and more generally \emph{monadically dependent} classes, can also cast light on preservation theorems.  
The arguments for homomorphism and extension preservation rely on quasi-wideness and almost-wideness respectively.  Similar wideness phenomena occur for dense classes, by replacing deletion of bottleneck points with performing \emph{flips}, that is, edge-complementations within subsets of the domain (see Table 1 in \cite{flipbreak}).  A question inspired by this is whether, for every $k \in \N$, the class of all graphs of cliquewidth at most $k$ has the extension preservation property. For $k\leq 2$ this follows from the fact that cographs are well-quasi-ordered by the induced subgraph relation \cite{damaschke1990induced}.






\bibliographystyle{plain}
\bibliography{bibliography}

\begin{thebibliography}{10}

\bibitem{ajtai}
Miklos Ajtai and Yuri Gurevich.
\newblock Datalog vs first-order logic.
\newblock {\em Journal of Computer and System Sciences}, 49(3):562--588, 1994.
\newblock 30th IEEE Conference on Foundations of Computer Science.

\bibitem{extensionspreservation}
Albert Atserias, Anuj Dawar, and Martin Grohe.
\newblock Preservation under extensions on well-behaved finite structures.
\newblock {\em SIAM Journal on Computing}, 38(4):1364--1381, 2008.

\bibitem{atseriaspreservation}
Albert Atserias, Anuj Dawar, and Phokion~G Kolaitis.
\newblock On preservation under homomorphisms and unions of conjunctive queries.
\newblock {\em Journal of the ACM (JACM)}, 53(2):208--237, 2006.

\bibitem{outerplanar}
Gary Chartrand and Frank Harary.
\newblock Planar {Permutation} {Graphs}.
\newblock {\em Annales de l'institut Henri Poincar\'e. Section B. Calcul des probabilit\'es et statistiques}, 3(4):433--438, 1967.

\bibitem{damaschke1990induced}
Peter Damaschke.
\newblock Induced subgraphs and well-quasi-ordering.
\newblock {\em Journal of Graph Theory}, 14(4):427--435, 1990.

\bibitem{dawar2007finite}
Anuj Dawar.
\newblock Finite model theory on tame classes of structures.
\newblock In {\em International Symposium on Mathematical Foundations of Computer Science}, pages 2--12. Springer, 2007.

\bibitem{dawar2010homomorphism}
Anuj Dawar.
\newblock Homomorphism preservation on quasi-wide classes.
\newblock {\em Journal of Computer and System Sciences}, 76(5):324--332, 2010.

\bibitem{dreier2023firstorder}
Jan Dreier, Ioannis Eleftheriadis, Nikolas Mählmann, Rose McCarty, Michał Pilipczuk, and Szymon Toruńczyk.
\newblock First-order model checking on monadically stable graph classes, 2023.

\bibitem{flipbreak}
Jan Dreier, Nikolas M{\"a}hlmann, and Szymon Toru{\'n}czyk.
\newblock Flip-breakability: A combinatorial dichotomy for monadically dependent graph classes.
\newblock {\em arXiv preprint arXiv:2403.15201}, 2024.

\bibitem{ebbinghaus}
H-D. Ebbinghaus and J.~Flum.
\newblock {\em Finite Model Theory}.
\newblock Springer, 2nd edition, 1999.

\bibitem{ganian2019shrub}
Robert Ganian, Petr Hlin{\v{e}}n{\`y}, Jaroslav Ne{\v{s}}et{\v{r}}il, Jan Obdr{\v{z}}{\'a}lek, and Patrice~Ossona De~Mendez.
\newblock Shrub-depth: Capturing height of dense graphs.
\newblock {\em Logical Methods in Computer Science}, 15, 2019.

\bibitem{gurevich1984toward}
Yuri Gurevich.
\newblock Toward logic tailored for computational complexity.
\newblock {\em Computation and Proof Theory}, pages 175--216, 1984.

\bibitem{nevsetvril2012sparsity}
Jaroslav Ne{\v{s}}et{\v{r}}il and Patrice~Ossona De~Mendez.
\newblock {\em Sparsity: graphs, structures, and algorithms}, volume~28.
\newblock Springer Science \& Business Media, 2012.

\bibitem{rossman2008homomorphism}
Benjamin Rossman.
\newblock Homomorphism preservation theorems.
\newblock {\em Journal of the ACM (JACM)}, 55(3):1--53, 2008.

\bibitem{Tait_1959}
W.~W. Tait.
\newblock A counterexample to a conjecture of $\text{S}$cott and $\text{S}$uppes.
\newblock {\em Journal of Symbolic Logic}, 24(1):15–16, 1959.

\end{thebibliography}

\end{document}